\definecolor{colorgreen}{HTML}{0F9D58} 
\definecolor{colorblue}{HTML}{0277BB}
\definecolor{colorred}{HTML}{DB4437}
\definecolor{coloryellow}{HTML}{F4B400}
\definecolor{colorviolet}{HTML}{4B0082}
\newcommand{\bstar}{\beta_{\mathrm{star}}}
\newcommand{\btree}{\beta_{\mathrm{tree}}}
\newcommand{\bline}{\beta_{\mathrm{serial}}}
\newcommand{\Bern}{\mathrm{Ber}}
\crefname{app}{Appendix}{Appendices}
\crefname{cor}{Corollary}{Corollaries}
\crefname{prop}{Proposition}{Propositions}
\crefname{lemma}{Lemma}{Lemmas}
\crefname{defn}{Definition}{Definitions}
\crefname{conj}{Conjecture}{Conjectures}
\crefname{exam}{Example}{Examples}
\crefname{supp}{Supplemental Section}{Supplemental Sections}
\newtheorem{theorem}{Theorem}
\newtheorem{cor}{Corollary}
\newtheorem{lemma}{Lemma}
\newtheorem{prop}{Proposition}
\newtheorem{defn}{Definition}
\newcommand{\bs}{\boldsymbol}
\newcommand{\bb}{\mathbb}
\newcommand{\mcal}{\mathcal}
\newcommand{\eye}{\bs{I}}
\newcommand{\zero}{\bs{0}}
\newcommand{\lb}{\left(}
\newcommand{\rb}{\right)}
\newcommand{\ls}{\left[}
\newcommand{\rs}{\right]}
\newcommand{\lc}{\left\{}
\newcommand{\rc}{\right\}}
\newcommand{\lv}{\left\vert}
\newcommand{\rv}{\right\vert}
\newcommand{\lV}{\left\Vert}
\newcommand{\rV}{\right\Vert}
\newcommand{\LRV}[1]{{\left\vert\kern-0.25ex\left\vert\kern-0.25ex\left\vert #1 \right\vert\kern-0.25ex\right\vert\kern-0.25ex\right\vert}}
\newcommand{\expect}[2]{\bb{E}_{#1}\lc#2\rc}
\newcommand{\nth}{^\mathsf{th}}
\newcommand{\matA}{\bs{A}}
\newcommand{\bbP}{\bb{P}}
\newcommand{\bbR}{\bb{R}}
\newcommand{\calO}{\mcal{O}}
\newcommand{\calT}{\mcal{T}}
\newcommand{\vecu}{\bs{u}}
\newcommand{\vecw}{\bs{w}}
\newcommand{\vecx}{\bs{x}}
\newcommand{\vecy}{\bs{y}}
\newcommand{\vecz}{\bs{z}}
\begin{document}

\title{Measurement Bounds for Compressed Sensing in Sensor Networks with Missing Data}
\author{Geethu Joseph and Pramod K. Varshney {\it Life Fellow, IEEE} 
\thanks{The authors are with the Dept.\ of EECS at Syracuse University, Syracuse, NY 13210, USA, emails:\{gjoseph,varshney\}@syr.edu.
This work was supported in part by National Science Foundation under Grant ENG 60064237. }
\thanks{This work is submitted in part to IEEE International Workshop on Signal Processing Advances in Wireless Communications, May 2020, Atlanta, GA, USA.}
}

\maketitle
\begin{abstract}
In this paper, we study the problem of sparse vector recovery at the fusion center of a sensor network from linear sensor measurements when there is missing data. In the presence of  missing data, the random sampling approach employed in compressed sensing is known to provide excellent reconstruction accuracy. However, when there is missing data, the theoretical guarantees associated with sparse recovery have not been well studied. Therefore, in this paper, we derive an upper bound on the minimum number of measurements required to ensure faithful recovery of a sparse signal when the generation of missing data is modeled using a Bernoulli erasure channel. We analyze three different network topologies, namely, star, (relay aided-)tree, and serial-star topologies. Our analysis establishes how the minimum required number of measurements for recovery scales with the network parameters, the properties of the measurement matrix, and the recovery algorithm. Finally, through numerical simulations, we show the variation of the minimum required number of measurements with different system parameters and validate our theoretical results. 
\end{abstract}
\begin{IEEEkeywords}Compressed sensing, missing data, restricted isometric property, sensor networks, measurement bounds.
\end{IEEEkeywords}

\section{Introduction}\label{sec:intro}
The signals of interest in several applications such as source localization, cognitive radio communication, MRI/ECG based health monitoring and anomaly detection in structural health monitoring are known to admit a sparse representation in an appropriate basis~\cite{meng2011collaborative,qaisar2013compressive,chae2013performance,joseph2016reconstruction,joseph2019anomaly}. While monitoring such physical phenomena using a wireless sensor network, a promising technique to collect data is compressed sensing (CS)-based data acquisition. In this data-acquisition technique, the compressed data from a sensor (which is referred to as a measurement henceforth) are linear projections of the unknown sparse signal. Thus, the sensor data sent to the fusion center (for example, the base station of the wireless sensor network) is a set of noisy linear measurements $\vecy\in\bbR^m$ of the unknown sparse vector $\vecx\in\bbR^N$:
\begin{equation}\label{eq:sys_star}
\vecy=\matA\vecx+\vecw,
\end{equation}
where $\matA\in\bbR^{m\times N}$ is the measurement matrix, and $\vecw\in\bbR^m$ is the bounded measurement noise, $\lv\vecw_i\rv\leq \sigma$ for $i=1,2,\ldots,m$. The fusion center recovers the sparse vector $\vecx$ using the standard CS algorithms like LASSO~\cite{tibshirani1996regression}, orthogonal matching pursuit~\cite{tropp2007signal}, or sparse Bayesian learning~\cite{tipping2001sparse} which faithfully recover $\vecx$ from $\vecy$, even if $m<N$. Therefore, the CS approach helps to enhance the overall lifetime of the sensor network by providing high reconstruction accuracy with low transmission overhead and less resource-demanding data compression method~\cite{bajwa2007joint,luo2010efficient,haupt2008compressed}.

CS-based recovery is a well-researched topic, both in terms of efficient reconstruction algorithms and strong theoretical guarantees. More importantly, CS-based data acquisition is also robust against missing data in the network that arise due to hardware failures, channel fading, synchronization issues, collisions, or environmental blockages~\cite{charbiwala2010compressive,cao2016data}. When there is missing data, only a portion of the measurement vector reaches the fusion center. We model this uncertainty in the received data using a channel model which we refer to as a \emph{Bernoulli erasure channel}. 
\begin{defn}[Bernoulli erasure channel\footnote{This channel model is different from the binary erasure channel that is widely used in information theory. In information theory, a binary erasure channel refers to a 2-input 3-output channel where there is an erasure output.}]\label{def:erasure_channel}
A data transmission channel is called a Bernoulli erasure channel with \emph{probability of observability} $p\in[0,1]$ if the receiving node (including the fusion center) receives the measurement (observes it) with probability $p$, and misses the measurement with probability $1-p$. The channel is represented using a Bernoulli distribution with mean equal to the probability of observability $p$.
\end{defn} 
Let $\calT\subseteq\{1,2,\ldots,m\}$  be the (known) random set of indices corresponding to the observed sensor measurements, i.e., the measurement vector available at the fusion center is given by
\begin{equation}\label{eq:sys_effective}
\vecy_{\calT} = \matA_{\calT}\vecx+\vecw_{\calT},
\end{equation}
where $\vecy_{\calT}\in\bbR^{\lv\calT\rv}$ denotes the measurements indexed by $\calT$, and the corresponding measurement matrix $\matA_{\calT}\in\bbR^{\lv\calT\rv\times N}$ is the submatrix of $\matA$ formed by the rows indexed by $\calT$. 

In our model formulation, missing data is handled by using random linear projections of the sparse signal, i.e., the rows of matrix $\matA$ are drawn independently from a common distribution. Since the random projections scatter the information contained in the network data over all (sensor) measurements, recovery is possible only if the number of measurements  $\lv\calT\rv$ that arrive at the fusion center is sufficiently large.\footnote{We discuss this point in detail in \Cref{sub:sparse_recovery}}. Thus, to ensure this, we need to increase the number of sensors in the network (oversampling) $m$. An important question in this context, which is also the focus of this paper, is as follows: \emph{How many additional sensors are required and how should they be arranged (network topology) to guarantee the same sparse signal recovery performance as in the case with no missing data when the measurements are obtained using random projections?} 

A sufficient condition that guarantees exact sparse recovery from \eqref{eq:sys_effective} is obtained via an important property of the measurement matrix called the restricted isometric property (RIP):
\begin{defn}[Restricted isometric property]\label{def:RIP}
A  matrix $\matA\in\bbR^{m\times N}$ is said to satisfy the $s$-RIP with restricted isometric constant (RIC) $\delta_s$ if $\delta_s\in(0,1)$, where
\begin{multline}
\delta_s \triangleq \inf\bigg\{\delta: 1 - \delta \leq \lV\matA\vecz\rV^2 \leq 1 + \delta, \\
\forall \lV\vecz\rV=1 \text{ and } \lV\vecz\rV_0\leq s\bigg\},
\end{multline}
where $\lV\cdot\rV_0$ denotes the $\ell_0$ norm of a vector.
\end{defn}
The RIP of the measurement matrix determines the recovery quality of several CS algorithms like LASSO or basis pursuit (BP), compressive sampling matching pursuit (CoSaMP), or iterative hard thresholding (IHT)~\cite{foucart2013mathematical}. The RIP also ensures that the recovery process is robust to noise and is stable when the unknown vector is not precisely sparse. So our analysis relies on the RIP of the random measurement matrix $\matA_{\calT}$ whose entries as well as the dimension (the number of rows = $\lv\calT\rv$) are random. 

Before we present our specific models and the associated results, we first review the existing literature on the sparse signal recovery with missing data.

\subsection{Related literature}
CS-based recovery to handle missing data was introduced in  \cite{charbiwala2010compressive}. This work assumed that all the sensors are directly connected to the fusion center via independent Bernoulli erasure channels. Using numerical simulation, the authors of \cite{charbiwala2010compressive} showed that the recovery performance of CS techniques was on par with that of BCH codes. Following this work, several other researchers have also used the CS-based approach to missing data problems in different applications such as  network traffic reconstruction~\cite{roughan2011spatio}, localization refinement~\cite{rallapalli2010exploiting}, urban traffic sensing improvement~\cite{li2011compressive}, structural health monitoring~\cite{ji2014method,thadikemalla2017simple}, etc. Further, some other related problems like the design of novel sensing matrices using tight frames for robust data acquisition~\cite{chen2012benefit}, and spatial and temporal data loss models~\cite{kong2013data} have also been investigated.  Some works have also studied the problem of sparse signal recovery with missing data for other network topologies like tree topology and the serial-star topology~\cite{cao2016data}.

Most of the above works are limited in terms of the theoretical guarantees for signal recovery, and they mostly validate the results using numerical simulations tested on synthetic or real-world data-sets.  In \cite{charbiwala2010compressive}, it was established that $1/p$ times more measurements are required to ensure the same average probability of faithful recovery as that without missing data, where $p$ is the probability of observability of the Bernoulli erasure channel.\footnote{See \Cref{prop:existing} for the precise mathematical statement.} However, the classical CS recovery algorithms (with no missing data) come with much stronger assurances like conditions to ensure perfect recovery with high probability.  Motivated by this, we derive similar conditions that ensure that the probability of successful recovery is greater than the specified level when the missing data in every link of the network is modeled using a Bernoulli erasure channel.

\subsection{Our contributions}
In this work, we derive guarantees for the sparse signal recovery problem when there is missing data, given by the model in \eqref{eq:sys_effective}. Our assumptions are the following: 
\begin{itemize}
\item The measurements are obtained using a subGaussian random matrix  $\matA$.
\item All the links in the network are independent Bernoulli erasure channels (See \Cref{def:erasure_channel}). 
\end{itemize}
We consider three network topologies: a single hop star topology (parallel topology); a two-hop relay-aided tree topology; and a multi-hop serial-star topology. As the number of hops increases, the data are more likely be missing. The main contributions of the paper are as follows:
\begin{itemize}
\item \emph{Star topology:} We analyze, in \Cref{sec:star}, the RIP of the effective measurement matrix $\matA_{\calT}$ when the sensors are directly connected to the fusion center  through independent Bernoulli erasure channels.  The analysis (\Cref{thm:rip}), in turn,  characterizes the additional number of measurements required to handle the missing data uncertainty, with high probability (\Cref{cor:noisy}). We show that the number of measurements $m$ scales as $\lb \log \frac{1}{1-p+\exp(-C\delta^2)p}\rb^{-1}$ to ensure the exact recovery of the sparse vector. Here, $C$ and $\delta$ depend on the distribution of the measurement matrix and the recovery algorithm, respectively. 
\item \emph{Tree topology:} We study the conditions under which a sparse vector can be faithfully recovered in a two-hop tree network in \Cref{sec:tree}. The first hop is between the sensors and the relays, and the second hop is between the relays and the fusion center. We show that sparse recovery is successful with high probability if and only if the number of relays (not the number of sensors) exceeds a minimum value. We also derive a sufficient condition on the number of relays that drives the probability of successful recovery to an arbitrarily high value (\Cref{thm:riptree}).
\item \emph{Serial-star topology:} \Cref{sec:line} studies a serial-star topology which refers to the topology in which a number of serial multi-hop branches are connected to the fusion center in a star fashion. We show that, in this case, accurate recovery of the sparse vector with high probability happens only if the number of branches is large. Our result provides a bound on the number of branches to ensure this condition, which depends on the length of the branch, the probability of missing data in each link, the distribution of $\matA$, and the recovery algorithm (\Cref{thm:ripline}). 
\end{itemize}
In summary, our results provide insights on how the sparse recovery in a wireless sensor network depends on the parameters such as the number of sensors, branches or relays, and the probability of observability of each link. 

\textbf{Notation:} Boldface lowercase letters denote vectors, boldface uppercase letters denote matrices, and calligraphic letters denote sets. The rest of notation is given in \Cref{tab:notation}.
\begin{table}
\caption{Notation}
\label{tab:notation}
\begin{center}
{\normalsize
\begin{tabular}{|r|l|}
\hline
\textbf{Notation} & \textbf{Description}\\
\hline
$\bbR$  & Set of real numbers\\
$\eye$ & Identity matrix\\
$\zero$ & All zero matrix (or vector)\\
$\Vert \cdot \Vert$ & $\ell_2$ norm\\
$\Vert  \cdot \Vert_1$&  $\ell_1$ norm\\
$\lv\cdot\rv$ &Absolute value of a real number or the \\ 
& \hfill  cardinality of a set (depends on the context)\\
$\bbP\{\cdot\}$ & Probability of an event\\
$\Bern(\cdot)$ & Bernoulli distribution on $\{0,1\}$ \\
&\hfill parameterized by its mean\\
\hline
\end{tabular}
}
\end{center}
\end{table}

\section{Star Topology}\label{sec:star}

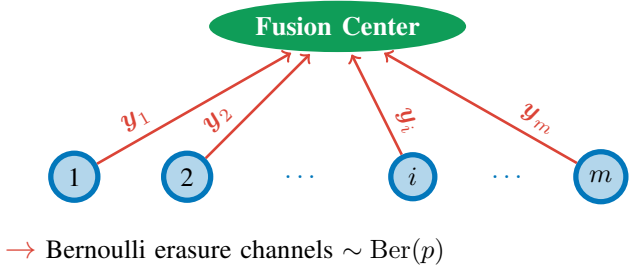
\begin{figure}[ht]
\begin{tikzpicture}

\tikzstyle{cloud} = [draw,  fill=colorblue!30,circle,text centered,scale = 1,line width=2pt,text=black,minimum height = 0.5cm]
\tikzstyle{box} = [fill=colorgreen,ellipse,  rounded corners, text centered,node distance = 1cm,line width=2pt]
\tikzstyle{line} = [->,line width=1pt]
\tikzstyle{comment} = [near start, above, sloped]

\node[box] 
  (I) {\textcolor{white}{\textbf{Fusion Center}}};
\color{colorblue}
  \path (I)+(-3.5,-2) node[cloud] (N1) {1};
  \path (I)+(-2,-2) node[cloud] (N2) {2};
  \path (I)+(1,-2) node[cloud] (N3) {$i$};
  \path (I)+(3.5,-2) node[cloud] (Nm) {$m$};

  \path (I)+(-0.5,-2) node (Nd1) {$\ldots$};
  \path (I)+(2.25,-2) node (Nd2) {$\ldots$};

\color{colorred}  
  \draw [line] (N1) -- (I) node [comment] (TextNode) {$\vecy_1$};
  \draw [line] (N2) -- (I) node [comment] (TextNode) {$\vecy_2$};  
  \draw [line] (N3) -- (I) node [comment] (TextNode) {$\vecy_i$};  
  \draw [line] (Nm) -- (I) node [comment] (TextNode) {$\vecy_m$};

\color{black}     
     \path (N1)+(2,-1) node (R) {\textcolor{colorred}{\large$\to$} Bernoulli erasure channels $\sim \Bern (p)$};
\end{tikzpicture}
\caption{Star topology with $m$ sensors.}
\label{fig:star}
\end{figure}
In this section, we consider a sensor network with star topology (also called as parallel topology) where $m$ sensors are individually connected to the fusion center as shown in \Cref{fig:star}~\cite{zafeiropoulos2009data,charbiwala2010compressive}. We assume that the channels between the sensors and the fusion center are unreliable\footnote{Here, we refer to the unreliablity due to missing data and do not refer to the channel noise. }and are independent of each other. The uncertainty introduced by the channel in terms of missing data  is modeled using an independent Bernoulli erasure channel with probability of observability as $p$ (See \Cref{def:erasure_channel}), i.e., every projection (entry of $\vecy$) is observed at the fusion center with probability $0\leq p\leq1$. The value of $p$ is independent of the missing and observed values and indices, i.e.,
\begin{equation}
\bbP\lc i\in\calT\rc=p, \hspace{0.5cm}i\in\{1,2,\ldots,m\},
\end{equation}
where $\calT\subseteq\{1,2,\ldots,m\}$ is the set of indices of the observed measurements. Our goal is to derive a sufficient condition on $m$ that ensures that the unknown vector $\vecx$ can be recovered from $\vecy_{\calT}$ with probability greater than a desired (high) value. 

We need the following definitions to present our results:
\begin{defn}[subGaussian random variable]
A random variable $A$ is said to be subGaussian with parameter $c$ if for any $\theta\in\bbR$,
\begin{equation}\label{eq:con_subgauss}
\expect{}{\exp\lb \theta A \rb } \leq \exp\lb c\theta^2 \rb.
\end{equation}
\end{defn}
\begin{defn}[subGaussian random matrix]\label{def:subGaussian}
A random matrix $\matA\in\bbR^{m\times N}$ is said to be a subGaussian random matrix with parameter $c$ if its entries are independent zero mean and unit variance subGaussian random variables with common parameter~$c$.
\end{defn}

The main result of the section is as follows:
\begin{theorem}\label{thm:rip}
Consider a star topological sensor network with $m$ sensors, whose measurement model is given by \eqref{eq:sys_star}. Each link in the network is modeled using an independent Bernoulli erasure channel with   probability of observability $p\in[0,1]$. Suppose $\matA\in\bbR^{m\times N}$  is a subGaussian random matrix with parameter $c$. Let $\calT\subseteq\{0,1,\ldots,m\}$ be the set of indices of measurements observed at the fusion center. Then, for any $\epsilon>0$, if $m\geq \bstar(p,\delta)$ where
\begin{multline}\label{eq:m_bound}
\bstar(p,\delta)\triangleq \ls\ln\frac{1}{ 1-p+p\exp(-C\delta^2)}\rs^{-1}\\
\times\ls \frac{4}{3}s\ln\lb \frac{eN}{s}\rb+\frac{14}{3}s+\frac{4}{3}\ln 2\epsilon^{-1}\rs,
\end{multline}
the RIC $\delta_s$ of $\lv\calT\rv^{-1}\matA_{\calT}$ given in \eqref{eq:sys_effective} satisfies $\delta_s < \delta$ for all $0<\delta<1$ with probability at least $1-\epsilon$. Here, $C>0$ is a constant dependent only on $c$. 
\end{theorem}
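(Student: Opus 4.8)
\emph{Proof strategy.} The plan is to reduce the missing-data RIP claim to the classical restricted isometry property of a subGaussian matrix with a \emph{fixed} number of rows, and then average over the random row count $\lv\calT\rv$. Concretely, I would condition on the realised set $\calT$, apply the standard subGaussian RIP tail bound to the (then deterministic-dimensioned) matrix $\matA_{\calT}$, and integrate the resulting bound against the law of $\lv\calT\rv$, which is $\mathrm{Binomial}(m,p)$.

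Step 1 (classical bound). For a subGaussian random matrix $\matB\in\bbR^{n\times N}$ with parameter $c$ there is a constant $C>0$ depending only on $c$ --- the same $C$ appearing in \eqref{eq:m_bound} --- with
\begin{equation}\label{eq:plan-classical}
\bbP\lc\delta_s\lb n^{-1/2}\matB\rb\geq\delta\rc\leq 2\exp\lb-\tfrac34 Cn\delta^2+s\ln\tfrac{eN}{s}+\tfrac72 s\rb .
\end{equation}
I would obtain this by the usual two-level union bound: over the $\binom{N}{s}\leq(eN/s)^s$ supports, and, within each support, over a $\rho$-net of the unit sphere of $\bbR^s$ (cardinality $\leq(1+2/\rho)^s$), using the Bernstein tail $\bbP(|\,\lV n^{-1/2}\matB\vecz\rV^2-1\,|\geq t)\leq 2e^{-Cnt^2}$, valid for $t\in(0,1)$ since $\langle\vecb_i,\vecz\rangle^2-1$ is sub-exponential. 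Taking $\rho=1/16$ gives $\ln(1+2/\rho)=\ln 33\leq\tfrac72$, while the net-to-sphere passage $\delta_s\leq(1-2\rho)^{-1}\max_{\vecz}|\,\lV n^{-1/2}\matB\vecz\rV^2-1\,|$ costs a factor $(1-2\rho)^2=49/64\geq\tfrac34$ in the exponent; combining these yields \eqref{eq:plan-classical}. (If a reference with exactly these constants is inconvenient, this short argument reproduces them.)

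Step 2 (averaging over the erasures). Since the rows of $\matA$ are i.i.d.\ and independent of the $\Bern(p)$ observability indicators, conditioned on $\lv\calT\rv=n$ the submatrix $\matA_{\calT}$ is itself a subGaussian random matrix with parameter $c$ and $n$ rows, so \eqref{eq:plan-classical} applies. Averaging,
\begin{align}
\bbP\lc\delta_s\lb\lv\calT\rv^{-1/2}\matA_{\calT}\rb\geq\delta\rc
&\leq 2e^{s\ln(eN/s)+\frac72 s}\,\bbE\ls e^{-\frac34 C\lv\calT\rv\delta^2}\rs\notag\\
&= 2e^{s\ln(eN/s)+\frac72 s}\lb 1-p+p\,e^{-\frac34 C\delta^2}\rb^{m},\notag
\end{align}
because $\bbE[e^{\theta\lv\calT\rv}]=(1-p+p\,e^{\theta})^{m}$, evaluated here at $\theta=-\tfrac34 C\delta^2$. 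The degenerate event $\lv\calT\rv=0$ needs no separate treatment: \eqref{eq:plan-classical} is trivially true at $n=0$ (its right-hand side exceeds $1$).

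Step 3 (log-convexity and solving for $m$). The scalar map $t\mapsto 1-p+p\,e^{-t}$ is log-convex on $[0,\infty)$ --- its second logarithmic derivative is $p(1-p)e^{-t}(1-p+pe^{-t})^{-2}\geq 0$ --- and takes the value $1$ at $t=0$, hence $1-p+p\,e^{-\frac34 C\delta^2}\leq(1-p+p\,e^{-C\delta^2})^{3/4}$. Substituting this and requiring the bound of Step 2 to be at most $\epsilon$ becomes
\[
\tfrac34 m\,\ln\frac{1}{1-p+p\,e^{-C\delta^2}}\;\geq\;s\ln\tfrac{eN}{s}+\tfrac72 s+\ln\tfrac{2}{\epsilon},
\]
and multiplying through by $\tfrac43$ gives exactly $m\geq\bstar(p,\delta)$ of \eqref{eq:m_bound}. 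The one genuinely delicate point is Step 1 --- isolating (or re-deriving) the subGaussian RIP tail with these particular constants; once that is in hand, Steps 2 and 3 are just the conditioning identity and a one-line convexity estimate.
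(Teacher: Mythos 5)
Your proposal is correct and follows essentially the same route as the paper: a union bound over supports and a net, the subGaussian concentration inequality applied conditionally on $\lv\calT\rv$, and averaging via the binomial moment generating function $\bbE[e^{\theta\lv\calT\rv}]=(1-p+pe^{\theta})^m$, with the same effective constants ($t=(1-2\rho)\delta$, $t^2\geq\tfrac34\delta^2$, prefactor $(eN/s)^s e^{7s/2}$). The only differences are cosmetic --- you condition after assembling the full RIP tail rather than at the level of the single-vector deviation bound, and you spell out the log-convexity step $1-p+pe^{-\frac34 C\delta^2}\leq(1-p+pe^{-C\delta^2})^{3/4}$ that the paper leaves implicit --- so the argument matches.
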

\begin{proof}
See \Cref{app:rip}.
\end{proof}
Next, we discuss some implications of the above result. 
\subsection{Sparse signal recovery guarantees}\label{sub:sparse_recovery}
An immediate consequence of the above result is the guarantee for sparse signal recovery with missing data.
The RIP based analysis in \Cref{thm:rip} allows us to bound  the error in the recovery of the unknown vector $\vecx$ under bounded noise and model mismatch, i.e., when the measurements are noisy and the unknown vector is not exactly sparse, respectively.  
\begin{cor}\label{cor:noisy}
Consider a star topological sensor network with $m$ sensors, whose measurement model is given by \eqref{eq:sys_star}. Each link in the network is modeled using an independent Bernoulli erasure channel with   probability of observability $p\in[0,1]$. Suppose $\matA\in\bbR^{m\times N}$ is the measurement matrix of the network, which is a subGaussian random matrix with parameter $c$. Then, for some integer $r_{\mathrm{algo}}>0$ and  positive real number $\delta_{\mathrm{algo}}$, let  
\begin{multline}\label{eq:m_bound_noisy}
m \geq \ls\ln\frac{1}{ 1-p+p\exp(-C\delta_{\mathrm{algo}}^2)}\rs^{-1}\\
\times\ls \frac{4}{3}r_{\mathrm{algo}}s\ln\lb\frac{eN}{r_{\mathrm{algo}}s}\rb+\frac{14}{3}r_{\mathrm{algo}}s+\frac{4}{3}\ln 2\epsilon^{-1}\rs,
\end{multline} 
where $C>0$ is a constant dependent only on $c$. For any $\epsilon>0$, if \eqref{eq:m_bound_noisy} holds, with probability at least $1-\epsilon$, the unknown vector $\vecx$ can be recovered from the observed measurements given by \eqref{eq:sys_effective} with the following error bounds:
\begin{align}
\lV\vecx-\hat{\vecx}\rV_1 &\leq c_1\underset{\substack{\vecz\in\bbR^N\\\lV\vecz\rV_0\leq s}}{\min}\lV \vecx-\vecz\rV_1 + c_2\sqrt{s}\sigma\\
\lV\vecx -\hat{\vecx}\rV &\leq c_1\underset{\substack{\vecz\in\bbR^N\\\lV\vecz\rV_0\leq s}}{\min}\lV \vecx-\vecz\rV_1 /\sqrt{s} + c_2\sigma.
\end{align} 
Here, $\hat{\vecx}$ is the estimate of the unknown vector $\vecx$ determined by a sparse recovery algorithm, and $c_1,c_2>0$ are universal constants. 
The constants $r_{\mathrm{algo}}$ and $\delta_{\mathrm{algo}}$ are dependent on the recovery algorithms as follows:
\begin{itemize}
\item BP: $r_{\mathrm{algo}}=2$ and $\delta_{\mathrm{algo}} = \frac{4}{\sqrt{41}}$
\item 
IHT: $r_{\mathrm{algo}}=6$ and $\delta_{\mathrm{algo}}=\frac{1}{\sqrt{3}}$
\item 
CoSAMP: $r_{\mathrm{algo}}=8$ and $\delta_{\mathrm{algo}}=\frac{\sqrt{\sqrt{11/3}-1}}{2}$.
\end{itemize}
\end{cor}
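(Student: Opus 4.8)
The plan is to obtain \Cref{cor:noisy} as a deterministic consequence of \Cref{thm:rip} combined with standard RIP-based sparse-recovery guarantees, so that no new probabilistic argument is required. The first observation is that the right-hand side of \eqref{eq:m_bound_noisy} is precisely $\bstar(p,\delta_{\mathrm{algo}})$ of \eqref{eq:m_bound} with the sparsity level $s$ replaced by $r_{\mathrm{algo}}s$: the factor $\ls\ln(1-p+p\exp(-C\delta^2))^{-1}\rs^{-1}$ becomes its value at $\delta=\delta_{\mathrm{algo}}$, and the bracketed term becomes $\frac{4}{3}(r_{\mathrm{algo}}s)\ln\lb eN/(r_{\mathrm{algo}}s)\rb+\frac{14}{3}(r_{\mathrm{algo}}s)+\frac{4}{3}\ln 2\epsilon^{-1}$. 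Since each $\delta_{\mathrm{algo}}$ listed in the statement lies in $(0,1)$ and (implicitly) $r_{\mathrm{algo}}s\le N$, \Cref{thm:rip} applies with $(r_{\mathrm{algo}}s,\delta_{\mathrm{algo}})$ in place of $(s,\delta)$ and shows that whenever \eqref{eq:m_bound_noisy} holds, the normalized matrix $\matB\triangleq\lv\calT\rv^{-1/2}\matA_{\calT}$ satisfies the $(r_{\mathrm{algo}}s)$-RIP with $\delta_{r_{\mathrm{algo}}s}(\matB)<\delta_{\mathrm{algo}}$, with probability at least $1-\epsilon$.

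I would then condition on this event and transfer the noise model before invoking the recovery theorems. Rescaling \eqref{eq:sys_effective} gives $\lv\calT\rv^{-1/2}\vecy_{\calT}=\matB\vecx+\lv\calT\rv^{-1/2}\vecw_{\calT}$; since $\lv\vecw_i\rv\le\sigma$ for each $i$, we get $\lV\vecw_{\calT}\rV\le\sqrt{\lv\calT\rv}\,\sigma$, so the rescaled perturbation $\lv\calT\rv^{-1/2}\vecw_{\calT}$ has $\ell_2$ norm at most $\sigma$ --- the $\sqrt{\lv\calT\rv}$ cancels and the effective noise level is exactly $\sigma$, matching the $c_2\sigma$ terms. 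On the RIC event above, the off-the-shelf RIP-based recovery guarantees for BP, IHT, and CoSaMP recalled in \cite{foucart2013mathematical} --- which hold under $\delta_{r_{\mathrm{algo}}s}(\matB)<\delta_{\mathrm{algo}}$ for exactly the triples $(r_{\mathrm{algo}},\delta_{\mathrm{algo}})$ tabulated in the statement --- yield the claimed $\ell_1$ and $\ell_2$ error bounds with universal constants $c_1,c_2$. Combining with the probability bound of \Cref{thm:rip} finishes the proof.

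There is no genuine difficulty in this argument; the corollary is a bookkeeping composition of \Cref{thm:rip} with known results. The only points that require attention are: (i) recognizing that \eqref{eq:m_bound_noisy} is $\bstar$ evaluated at the inflated order $r_{\mathrm{algo}}s$, so that a single invocation of \Cref{thm:rip} at that order produces the RIC bound the recovery algorithm needs; and (ii) verifying that the $\ell_\infty$-bounded noise of \eqref{eq:sys_star}, after the $\lv\calT\rv^{-1/2}$ normalization that turns $\matA_{\calT}$ into an approximate isometry $\matB$, becomes an $\ell_2$-bounded perturbation of magnitude $\sigma$, which is the input form assumed by the cited recovery bounds.
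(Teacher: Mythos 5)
Your proposal is correct and follows essentially the same route as the paper's own proof: invoke \Cref{thm:rip} at the inflated sparsity order $r_{\mathrm{algo}}s$ and RIC level $\delta_{\mathrm{algo}}$, rescale the observed system so that the $\ell_\infty$-bounded noise becomes an $\ell_2$ perturbation of size at most $\sigma$, and then apply the standard RIP-based recovery guarantees for BP, IHT, and CoSaMP. Your $\lv\calT\rv^{-1/2}$ normalization is in fact the one consistent with \Cref{lem:prob_subGaussian} (the paper writes $\lv\calT\rv^{-1}$), but either scaling yields the noise bound $\leq\sigma$ and the argument is the same.
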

\begin{proof}
We first note that the sparse signal recovery performance is independent of scaling. Since $\lv\vecw_i\rv<\sigma$, after scaling \eqref{eq:sys_effective} by $\lv\calT\rv^{-1}$, the norm of the noise term satisfies $\lV\lv\calT\rv^{-1}\vecw_{\calT}\rV\leq\sigma$. The result then follows from \Cref{thm:rip} and the upper bound on the RIC set by the different algorithms to ensure robust recovery~\cite[Theorems 6.12, 6.21, 6.28]{foucart2013mathematical}.
\end{proof}
The result reveals an elegant property of the sparse signal recovery guarantee. Assume that the number of measurements  $m$ is insufficient to recover an $s-$sparse unknown vector. Since the sufficient number of measurements is an increasing function of sparsity $s$, let $\tilde{s}<s$ be such that $m$ satisfies the bound in \eqref{eq:m_bound_noisy} for   $\tilde{s}$. Then, \Cref{cor:noisy} ensures that the algorithm recovers the $\tilde{s}-$sparse approximation of the unknown vector. 

\subsection{Special cases}
We consider the two extreme values for the probability $p$ with which a measurement is observed at the fusion center.
\begin{itemize}
\item \emph{$p=1$: }This value of $p$ corresponds to the case when there is no missing data, and \eqref{eq:m_bound} reduces to $m\geq \beta(\delta)$ where
\begin{equation}\label{eq:std_bound}
\beta(\delta) \triangleq \frac{1}{C\delta^2}\ls \frac{4}{3}s\ln\lb \frac{eN}{s}\rb+\frac{14}{3}s+\frac{4}{3}\ln 2\epsilon^{-1}\rs.
\end{equation}
As expected, the above bound is the same as the classical result on the sufficient number of measurements required for satisfying the RIC bound of a subGaussian matrix~\cite[Theorem 9.11]{foucart2013mathematical}.
\item \emph{$p=0$: }This value of $p$ corresponds to the case when no data are available to the recovery algorithm, and \eqref{eq:m_bound} reduces to $m\geq\infty$. This is justified as the recovery is not possible when no information about the unknown vector is available.
\end{itemize}

\subsection{Dependence on parameters}
The dependence of the bound $\bstar$ in \eqref{eq:m_bound}  on $N$ and $s$ is similar to the classical result on the sufficient number of measurements for a subGaussian matrix given in \eqref{eq:std_bound}. Both the bounds scale as $\calO(s\ln (N/s))$ if we keep $p$ and $\delta$ constant. Similarly, both the bounds are  $\calO(1/\delta^2)$. This is intuitive as a smaller value of $\delta$ ensures better recovery (see~\cite[Theorems 6.12, 6.21, 6.28]{foucart2013mathematical}) and thus, the number of measurements increases as $\delta$ decreases. 

We next study the influence of the probability of observability $p$. The ratio of the number of measurements required for recovery with and without missing data (ratio of the bounds in \eqref{eq:m_bound} and \eqref{eq:std_bound}) is
\begin{equation}\label{eq:ratio}
\frac{\bstar(p,\delta)}{\beta(\delta)} = C\delta^2\lb\ln\frac{1}{ 1-p+p\exp(-C\delta^2)}\rb^{-1}.
\end{equation}
Clearly, the ratio  depends only on three parameters: the probability of observability $p$, the parameter of the subGaussian matrix $c$, and the RIC value $\delta$. It is interesting to note that the ratio is independent of the dimension of the unknown vector $N$  and the sparsity $s$. The following proposition characterizes the properties of the bound $\bstar(p,\delta)$
\begin{prop}\label{prop:ratio}
Let $\bstar$ and $\beta$ be as defined in \eqref{eq:m_bound} and \eqref{eq:std_bound}. Then, the following statements hold:
\begin{enumerate}[label={(\roman*)}]
\item $\bstar(p,\delta)\geq \beta(\delta)/p$, for all $p\in[0,1]$ and values of $\delta\in(0,1)$.
\item $\bstar(p,\delta)$ is a strictly decreasing function of $p$ for a fixed value of $\delta\in(0,1)$.
\end{enumerate}
\end{prop}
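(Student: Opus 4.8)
The plan is to strip $\bstar$ down to its only $p$-dependent factor and reduce both claims to elementary scalar inequalities. Fix $\delta\in(0,1)$ and abbreviate $a:=\exp(-C\delta^2)\in(0,1)$ (it lies in $(0,1)$ since $C>0$), and let $K:=\frac{4}{3}s\ln(eN/s)+\frac{14}{3}s+\frac{4}{3}\ln 2\epsilon^{-1}>0$ denote the common, $p$-independent bracket appearing in \eqref{eq:m_bound} and \eqref{eq:std_bound}. Using $1-p+p\exp(-C\delta^2)=1-p(1-a)$, we have $\beta(\delta)=K/(C\delta^2)=K/\ln(1/a)$ and $\bstar(p,\delta)=K/h(p)$ with $h(p):=-\ln(1-p(1-a))$. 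Since $1-p(1-a)\in[a,1]$ for $p\in[0,1]$, $h$ is well defined and nonnegative, with $h(0)=0$ (so $\bstar(0,\delta)=+\infty$, consistent with the $p=0$ discussion) and $h(1)=\ln(1/a)$ (so $\bstar(1,\delta)=\beta(\delta)$).

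For part (i): after cancelling $K>0$, the inequality $\bstar(p,\delta)\ge\beta(\delta)/p$ is equivalent to $h(p)\le p\ln(1/a)$ for $p\in(0,1]$ (and both sides equal $+\infty$ at $p=0$). Rewriting and exponentiating, this is in turn equivalent to
\[
a^p\le 1-p+pa .
\]
I would obtain this from convexity of $t\mapsto a^t=e^{t\ln a}$ (its second derivative $(\ln a)^2 a^t$ is nonnegative): evaluating at the convex combination $t=p\cdot 1+(1-p)\cdot 0$ gives $a^p\le p\,a^1+(1-p)\,a^0=pa+(1-p)$, as required. The boundary cases $p=0$ and $p=1$ are checked directly from the remarks in the first paragraph.

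For part (ii): since $K>0$ does not depend on $p$, it suffices to show that $h$ is strictly increasing on $[0,1]$, whence $\bstar(\cdot,\delta)=K/h(\cdot)$ is strictly decreasing (with the value $+\infty$ at $p=0$ fitting at the top of the range). Differentiating,
\[
h'(p)=\frac{1-a}{1-p(1-a)}>0\qquad\text{for every }p\in[0,1],
\]
because $1-a>0$ (as $a<1$) and the denominator satisfies $1-p(1-a)\ge a>0$; strict monotonicity of $h$, and hence of $\bstar$, follows.

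The argument is short, and I do not expect a genuine obstacle: the only substantive step is recognizing that part (i) is precisely Jensen's inequality for $t\mapsto a^t$, and the only thing to be careful about is keeping the directions of the inequalities straight through the several sign flips ($\ln a<0$) and treating the endpoint $p=0$, where $\bstar$ is infinite, separately from the interior.
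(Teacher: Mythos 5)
Your proof is correct, and its overall structure matches the paper's: both parts reduce to the same scalar facts, namely the inequality $1-p+pa\geq a^p$ with $a=e^{-C\delta^2}$ for (i), and the strict monotonicity in $p$ of the logarithmic factor for (ii). The only genuine difference is how the key inequality in (i) is established. The paper defines $f_1(p,\delta)=1-p+pe^{-C\delta^2}-e^{-pC\delta^2}$, computes $\partial f_1/\partial p$, observes that it changes sign exactly once (positive then negative) on $[0,1]$, and concludes that $f_1$ attains its minimum at an endpoint, where it vanishes. You instead recognize $a^p=a^{p\cdot 1+(1-p)\cdot 0}\leq pa^1+(1-p)a^0$ as a direct consequence of the convexity of $t\mapsto a^t$, which dispatches the inequality in one line without any derivative or endpoint analysis. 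Your route is shorter and arguably more transparent; the paper's calculus argument yields the same conclusion but additionally locates the interior maximizer $\tilde p$, which is not needed for the statement. For part (ii) your differentiation of $h(p)=-\ln(1-p(1-a))$ is essentially the same computation as the paper's differentiation of the inner function $f_2(p,\delta)=1-p+pe^{-C\delta^2}$ followed by composition with the increasing map $u\mapsto -1/\ln u$ on $(0,1)$. Your explicit handling of the degenerate endpoint $p=0$ (where both sides of (i) are $+\infty$) is a small point of care that the paper glosses over.
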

\begin{proof}
See \Cref{app:ratio}.
\end{proof}
The above result is intuitive as $\bstar(p,\delta)\geq \beta(\delta)$ implies that more measurements are required when there is missing data. Also, as $p$ increases, more information about the unknown vector is available, and thus the bound $\bstar(p,\delta)$ decreases.
\subsection{Comparison with existing work}
The same model as that of ours, presented in \Cref{sec:star}, has been considered in \cite{charbiwala2010compressive}.  Their main theoretical result is as follows:
\begin{prop}\label{prop:existing}
Consider a star topological sensor network with $m$ sensors, whose measurement model is given by \eqref{eq:sys_star}. Each link in the network is modeled using an independent Bernoulli erasure channel with probability of observability $p\in[0,1]$. Suppose $\tilde{\matA}\in\bbR^{\tilde{m}\times N}$ is a random sensing matrix with independent and identically distributed rows, such that its RIC is $\delta$. Let $\matA\in\bbR^{m\times N}$ be the matrix formed by adding $m-\tilde{m}$ independent rows to $\tilde{\matA}$ which are distributed identically as the rows of $\tilde{\matA}$. If  $m=\frac{\bar{m}}{p}$, then the average  RIC of a suitably scaled version of $\matA_{\calT}$ is $\delta$ when averaged over the randomness in the missing data mechanism. Here, $\calT\subseteq\{0,1,\ldots,m\}$ is the set of indices of measurements observed at the fusion center. 
\end{prop}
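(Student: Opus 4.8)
The plan is to reduce the claim to a one-line linearity-of-expectation computation over the erasure variables. First I would write the rows of $\matA$ as $\veca_1,\ldots,\veca_m$ (the first $\tilde{m}$ being the fixed rows of $\tilde{\matA}$, the remaining $m-\tilde{m}$ being fresh i.i.d.\ copies of the common row distribution), and introduce independent indicators $b_i\sim\Bern(p)$, independent of $\matA$, so that $\calT=\{i:b_i=1\}$. For any $s$-sparse $\vecx$ one has $\lV\matA_{\calT}\vecx\rV^2=\sum_{i=1}^{m}b_i\langle\veca_i,\vecx\rangle^2$; taking the expectation over the missing-data mechanism and using $\bbE[b_i]=p$ gives $\bbE_{b}\big[\lV\matA_{\calT}\vecx\rV^2\big]=p\lV\matA\vecx\rV^2$ and $\bbE_b\big[\lv\calT\rv\big]=pm=\tilde{m}$. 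So the expected number of surviving measurements is exactly $\tilde{m}$ and the expected retained energy is $p$ times the full energy — this is the crux.

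The second step is to match this against the RIP of $\tilde{\matA}$. Under the normalization of \Cref{def:RIP}, ``$\tilde{\matA}$ has RIC $\delta$'' reads $1-\delta\le\lV\tilde{\matA}\vecx\rV^2\le1+\delta$ for all unit-norm $s$-sparse $\vecx$; and since the appended rows are i.i.d.\ copies of those of $\tilde{\matA}$, the full matrix obeys $\bbE_{\veca}\big[\tfrac1m\lV\matA\vecx\rV^2\big]=\bbE_{\veca}\big[\tfrac1{\tilde{m}}\lV\tilde{\matA}\vecx\rV^2\big]$ for each $\vecx$. Plugging $m=\tilde{m}/p$ into the first step, the retained matrix rescaled by $(\,\bbE\lv\calT\rv\,)^{-1/2}=\tilde{m}^{-1/2}$ satisfies $\bbE_b\big[\tfrac1{\tilde{m}}\lV\matA_{\calT}\vecx\rV^2\big]=\tfrac p{\tilde{m}}\lV\matA\vecx\rV^2=\tfrac1m\lV\matA\vecx\rV^2$, which on a typical draw of the appended rows (equivalently, in expectation over them) lies in $[1-\delta,1+\delta]$ uniformly over $s$-sparse unit $\vecx$. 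Reading off the infimum over admissible constants as in \Cref{def:RIP} then yields that the averaged restricted isometry constant of this suitably scaled $\matA_{\calT}$ is $\delta$; in words, the $1/p$ oversampling exactly compensates, on average, for the erasures.

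I expect the main obstacle to be conceptual rather than computational: the restricted isometry constant is a supremum over sparse vectors, and expectation does not commute with a supremum, so the clean identity above really controls $\sup_{\vecx}\big|\bbE_b[\tfrac1{\tilde{m}}\lV\matA_{\calT}\vecx\rV^2]-1\big|$ and \emph{not} $\bbE_b\big[\sup_{\vecx}|\tfrac1{\tilde{m}}\lV\matA_{\calT}\vecx\rV^2-1|\big]$. Identifying the former with ``the average RIC'' is the (mild) abuse of terminology in the statement, and it is precisely this gap — together with the fact that the statement controls only a mean and not a tail — that \Cref{thm:rip} is designed to close by supplying a genuine high-probability RIP guarantee. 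A minor secondary point is bookkeeping about the normalization: one must rescale $\tilde{\matA}$ and $\matA_{\calT}$ by matching factors ($\tilde{m}^{1/2}$, i.e.\ the square root of $\bbE\lv\calT\rv$) before comparing isometry constants, after which nothing beyond the linearity-of-expectation computation above is needed.
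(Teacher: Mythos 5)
The first thing to say is that the paper itself contains no proof of \Cref{prop:existing}: it is quoted as the main theoretical result of the cited prior work \cite{charbiwala2010compressive}, and the surrounding text only discusses it. So there is no in-paper argument to compare yours against; I can only assess your proposal on its own terms and against what the statement can plausibly mean.

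On those terms your argument is essentially right, and it is the standard way to make the $1/p$-oversampling heuristic precise. The core identity $\bbE_b\bigl[\Vert\matA_{\calT}\vecx\Vert^2\bigr]=p\Vert\matA\vecx\Vert^2$ together with $\bbE_b\bigl[\lv\calT\rv\bigr]=pm=\tilde{m}$ is correct, and you are right that the only honest reading of ``the average RIC is $\delta$'' is that $\sup_{\vecx}\bigl|\bbE_b[\tfrac1{\tilde m}\Vert\matA_{\calT}\vecx\Vert^2]-1\bigr|\leq\delta$, not that $\bbE_b$ of the supremum equals $\delta$; the paper makes exactly this criticism of \Cref{prop:existing} in the bullet points that follow it (``this probability only provides an insight into the average probability for faithful recovery''). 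Two smaller remarks. First, your step matching $\tfrac1m\Vert\matA\vecx\Vert^2$ to $\tfrac1{\tilde m}\Vert\tilde{\matA}\vecx\Vert^2$ needs an additional expectation over the $m-\tilde m$ appended rows (conditioned on $\tilde{\matA}$, the quantity $\tfrac1m\Vert\matA\vecx\Vert^2$ is still random), and strictly speaking that expectation gives $p\cdot\tfrac1{\tilde m}\Vert\tilde{\matA}\vecx\Vert^2+(1-p)\bbE\langle\veca,\vecx\rangle^2$, which lands in $[1-p\delta,\,1+p\delta]$ rather than exactly at RIC $\delta$ unless one identifies $\bbE\langle\veca,\vecx\rangle^2$ with the empirical value $\tfrac1{\tilde m}\Vert\tilde{\matA}\vecx\Vert^2$; you wave at this with ``on a typical draw,'' which is acceptable for a statement this informal but is a real looseness. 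Second, an alternative (and arguably closer to the spirit of \cite{charbiwala2010compressive}) route avoids the quadratic form entirely: since the erasures are independent of the matrix and the rows are i.i.d., $\matA_{\calT}$ conditioned on $\lv\calT\rv=k$ is distributed as a fresh $k\times N$ draw from the same row ensemble, and $\bbE\lv\calT\rv=\tilde m$, so ``on average'' the surviving matrix is distributionally a copy of $\tilde{\matA}$ and inherits its RIC. Either way, the substance of the proposition is the one-line expectation computation you give, and the gap you identify is precisely the gap \Cref{thm:rip} is written to close.
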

The key observation obtained by comparing our results and \Cref{prop:existing} are as follows:
\begin{itemize}
\item \Cref{prop:existing} handles  the average probability for exact recovery, conditioned on the fact that $\bar{\matA}$ has RIC equal to $\delta$. Thus, the result deals only with the randomness in the missing data mechanism, and this probability only provides an insight into the average probability for faithful recovery. However, an explicit characterization of the average RIC of  a random matrix is not available in the literature. Consequently, the connection between the number of measurements and the other system parameters can not be established using this result. On the other hand, we provide a high probability result for successful recovery, which handles the randomness in both the measurement matrix and the missing data mechanism. Therefore, we derive an explicit relation between the number of measurements and the system parameters like sparsity $s$ and the length of the unknown vector $N$ (see \eqref{eq:m_bound} and \eqref{eq:m_bound_noisy}) and thus, our result is much stronger than \Cref{prop:existing} derived in~\cite{charbiwala2010compressive}.
\item Statement (i) of \Cref{prop:ratio} indicates that the bound $\bstar$ is greater than the bound given in \Cref{prop:existing}. This is expected as we need more number of measurements to ensure RIC with a high probability than those required to ensure an average RIC.
\end{itemize}

\section{Tree Topology}\label{sec:tree}
\begin{figure}[ht]
\begin{tikzpicture}
\tikzstyle{cloud} = [draw,  fill=colorblue!30,circle,text centered,scale = 1,line width=2pt,text=black,minimum height = 0.5cm]
\tikzstyle{box} = [fill=colorgreen,ellipse,  rounded corners, text centered,node distance = 1cm,line width=2pt]
\tikzstyle{dia} = [fill=coloryellow,rectangle,  rounded corners, text centered,node distance = 1cm,line width=2pt]
\tikzstyle{line} = [->,line width=1pt]
\tikzstyle{comment} = [midway, above, sloped]

\node[dia] 
  (I1) {\textcolor{white}{\textbf{Relay 1}}};
\color{colorblue}
  \path (I1)+(-1.25,-2) node[cloud] (N11) {1};
  \path (I1)+(-0.5,-2) node[cloud] (N12) {2};
  \path (I1)+(1,-2) node[cloud] (N1K) {$K$};
\color{colorred}    
  \path (I1)+(0.25,-2) node (Nd1) {$\ldots$};
 
  \draw [line] (N11) -- (I1);
  \draw [line] (N12) -- (I1);
  \draw [line] (N1K) -- (I1);


  \path (I1)+(2.7,0) node[dia] 
  (I2) {\textcolor{white}{\textbf{Relay 2}}};
\color{colorblue}
  \path (I2)+(-0.75,-2) node[cloud] (N21) {1};
  \path (I2)+(0.75,-2) node[cloud] (N2K) {$K$};
  \path (I2)+(0,-2) node (Nd1) {$\ldots$};
\color{colorred}   
  \draw [line] (N21) -- (I2);
  \draw [line] (N2K) -- (I2);
  
 \path (I2)+(1.5,0) node(C1) {\textcolor{coloryellow}{$\ldots$}};

  \path (I2)+(3.3,0) node[dia] 
  (I3) {\textcolor{white}{\textbf{Relay R}}};
\color{colorblue}
  \path (I3)+(-0.75,-2) node[cloud] (N31) {1};
  \path (I3)+(0.75,-2) node[cloud] (N3K) {$K$};
  \path (I3)+(0,-2) node (Nd1) {$\ldots$};
\color{colorred}   
  \draw [line] (N31) -- (I3);
  \draw [line] (N3K) -- (I3);

\path (I2)+(0.5,2) node[box] 
  (F) {\textcolor{white}{\textbf{Fusion Center}}};
  \color{colorviolet}
  \draw [line,dashed] (I1) -- (F); 
  \draw [line, dashed] (I2) -- (F);
  \draw [line,dashed] (I3) -- (F);
       
\color{black}
\path (N11)+(3,-1) node (R) {\textcolor{colorred}{\large$\to$} Bernoulli erasure channels $\sim \Bern (p)$};
\path (N11)+(3,-1.5) node (R) {\textcolor{colorviolet}{\large$\dashrightarrow$} Bernoulli erasure channels $\sim \Bern (q)$};
\end{tikzpicture}
\caption{Tree topology with $R$ relays and $K$ sensors connected to each of the relays.}
\label{fig:tree}
\end{figure}
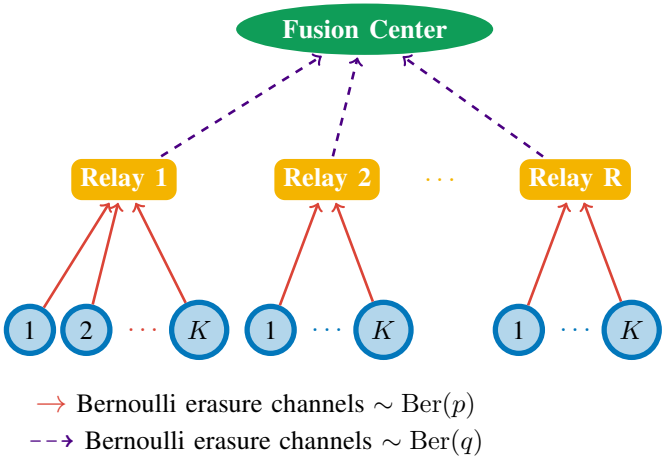
In this section, we consider a two-hop tree topology which consists of $R$ relays and $K$ sensors connected to each of these relays, as shown in \Cref{fig:tree}~\cite{cao2016data,zafeiropoulos2009data}. The measurement model is given by \eqref{eq:sys_star} where the total number of measurements $m=RK$. Each sensor sends its measurement to the relay to which it is connected, and these  measurements collected at the relays are sent to the fusion center by employing the `receive-and-forward' mechanism. All the channels, namely,  the $RK$ channels between the sensors and their corresponding relays, and the $R$ channels between the relays and the fusion center are assumed to be independent Bernoulli erasure channels (See \Cref{def:erasure_channel}). We assume that the probability of observability of the channel between a sensor and the corresponding relay node is $0\leq p\leq1$, and the probability of observability of the channel between a relay and the fusion center is $0\leq q\leq1$. Therefore, a sensor measurement is observable at the fusion center only if it is successfully transmitted from the sensor node to the relay node, and then from the relay node to the fusion center. Therefore, the probability of observing a sensor measurement at the fusion center is given by
\begin{equation}
\bbP\lc i\in\calT\rc=pq, \hspace{0.5cm}i\in\{1,2,\ldots,m\},
\end{equation}
where $\calT$ is the set of indices of sensor measurements observed at the fusion center. Suppose that the data transmitted from a relay to the fusion center is missing. Then, all the measurements corresponding to that relay are missing. Thus, unlike the star topology, in this case, observability of the measurements (corresponding to the same relay) are dependent. Consequently, the results of the star topology do not directly apply here.

Similar to  the star topology, the unknown vector $\vecx$ can be recovered using the observed sensor measurements given by \eqref{eq:sys_effective}. In the following, we derive the number of measurements required to ensure exact recovery of a sparse vector under the above model. Our main result is as follows:

\begin{theorem}\label{thm:riptree}
Consider a tree topological sensor network with $R$ relays and $K$ sensors per relay, whose measurement model is given by \eqref{eq:sys_star}. Each link in the network is modeled using an independent Bernoulli erasure channel with $p,q\in[0,1]$ as the probabilities of observability of a link between a sensor and a relay, and a relay and the fusion center, respectively. Suppose $\matA\in\bbR^{m\times N}$ is a subGaussian random matrix with parameter $c$.  Let $\calT\subseteq\{0,1,\ldots,m\}$ be the set of indices of measurements observed at the fusion center. Then, for any $\epsilon>0$, if $R\geq\btree(p,q,\delta,K)$ where
\begin{multline}\label{eq:m_tree}
\btree(p,q,\delta,K)\geq \ls\ln\frac{1}{ 1-q+q\lb 1-p+p e^{-C\delta^2}\rb^{K}}\rs^{-1}\\
\times\ls \frac{4}{3}s\ln\lb \frac{eN}{s}\rb+\frac{14}{3}s+\frac{4}{3}\ln 2\epsilon^{-1}\rs,
\end{multline}
the {RIC} $\delta_s$ of $\lv\calT\rv^{-1}\matA_{\calT}$ given in \eqref{eq:sys_effective} satisfies $\delta_s < \delta$ for all $0<\delta<1$ with probability at least $1-\epsilon$. Here, $C>0$ is a constant dependent only on $c$. 
\end{theorem}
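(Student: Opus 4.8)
The plan is to lift the proof of \Cref{thm:rip} (carried out in \Cref{app:rip}) to the tree, the only substantive change being that the single independent Bernoulli thinning of the rows of $\matA$ is replaced by the two-level thinning dictated by the topology. Write $\veca_1,\dots,\veca_m$ for the rows of $\matA$, partition $\{1,\dots,m\}$ into the $R$ relay blocks $\calG_r=\{(r-1)K+1,\dots,rK\}$, and introduce mutually independent indicators $\eta_r\sim\Bern(q)$ (the relay-$r$-to-fusion-center link is alive) and $\zeta_i\sim\Bern(p)$ (the sensor-$i$-to-its-relay link is alive), all independent of $\matA$; then $i\in\calT$ iff $\zeta_i\eta_{r(i)}=1$. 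The RIC of the (normalized) $\matA_\calT$ is below $\delta$ precisely when $\bigl|\lV\matA_\calT\vecz\rV^2-\lv\calT\rv\bigr|<\delta\,\lv\calT\rv$ for every $s$-sparse unit vector $\vecz$, and since
\[
\lV\matA_\calT\vecz\rV^2-(1+\delta)\lv\calT\rv=\sum_{r=1}^R\eta_r\sum_{i\in\calG_r}\zeta_i\bigl(\langle\veca_i,\vecz\rangle^2-1-\delta\bigr),
\]
I will bound the probability of the bad event for each fixed $\vecz$ by a Chernoff estimate on this sum, and symmetrically for the lower tail.

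The first ingredient is the single-row bound already used for \Cref{thm:rip}: by the subGaussian/Bernstein estimate there exist $\theta=\theta(\delta,c)>0$ and $C=C(c)>0$ such that $\expect{\veca}{\exp\bigl(\theta(\langle\veca,\vecz\rangle^2-1-\delta)\bigr)}\le e^{-C\delta^2}$ for every $s$-sparse unit $\vecz$ and every $\delta\in(0,1)$, with the mirror statement for the lower tail. Applying $\bbP\{S\ge0\}\le\expect{}{e^{\theta S}}$ to the displayed $S$, using that the $R$ blocks are independent and that inside block $r$ conditioning on $\eta_r$ decouples the $K$ independent pairs $(\zeta_i,\veca_i)$, and bounding the $\eta_r=0$ contribution by $1$ (legitimate since $1-p+p\,e^{-C\delta^2}\le1$), gives
\[
\bbP\bigl\{\lV\matA_\calT\vecz\rV^2\ge(1+\delta)\lv\calT\rv\bigr\}\le\prod_{r=1}^R\Bigl[1-q+q\bigl(1-p+p\,e^{-C\delta^2}\bigr)^{K}\Bigr]=\Bigl(1-q+q\bigl(1-p+p\,e^{-C\delta^2}\bigr)^{K}\Bigr)^{R},
\]
where within each block one first factorizes the sum over $i\in\calG_r$ into $K$ identical factors $1-p+p\,e^{-C\delta^2}$. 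The lower tail obeys the same bound, so the per-vector failure probability is at most twice this quantity; note that this single computation absorbs the randomness of $\matA$ and of the (within-block dependent) erasures at once, and that the degenerate case $\calT=\varnothing$ is harmlessly contained in both tail events (each reading $0\ge0$).

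From here the argument is verbatim that of \Cref{app:rip}: take a covering net of the $s$-sparse unit sphere of cardinality at most $(eN/s)^s(1+2/\rho)^s$, union-bound the two-sided per-vector estimate over it, convert net control into a bound on $\delta_s$ (the choice of $\rho$ and this conversion produce the numerical coefficients $\tfrac{4}{3}$ and $\tfrac{14}{3}$), and require the result to be at most $\epsilon$ --- this is exactly the star-topology computation with the per-sensor survival factor $1-p+p\,e^{-C\delta^2}$ replaced by the per-relay factor $1-q+q(1-p+p\,e^{-C\delta^2})^{K}$, and solving for $R$ yields $R\ge\btree(p,q,\delta,K)$ with the expression in \eqref{eq:m_tree}. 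The only genuinely new point, and the sole place where care is needed, is the block Chernoff step: one must check that $\xi_i=\zeta_i\eta_{r(i)}$ lets $\eta_r$ be pulled in front of the sum over block $r$ so that conditioning on $\eta_r$ factorizes that block, and that the two conditional contributions combine into $1-q+q(1-p+p\,e^{-C\delta^2})^K$; everything else is bookkeeping inherited from \Cref{app:rip}. As sanity checks, $q=1$ collapses \eqref{eq:m_tree} to $RK\ge\bstar(p,\delta)$ (i.e.\ \Cref{thm:rip} with $m=RK$) and $K=1$ collapses it to \Cref{thm:rip} with effective observability $pq$.
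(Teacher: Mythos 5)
Your proposal is correct, and at its core it performs the same computation as the paper, but it packages the erasure randomness differently, so a comparison is worth making. The paper's proof (i) reduces RIC control to a per-vector two-sided tail bound via the covering lemma (\Cref{lem:RIP}), (ii) computes the exact distribution of $\lv\calT\rv$ through a generating-function argument (\Cref{lem:generatortree}: $\bbP\{\lv\calT\rv=i\}$ is the coefficient $D_i$ of $g(x)=[1-q+q(1-p+px)^K]^R$), and (iii) conditions on $\lv\calT\rv=i$, applies the aggregate matrix concentration bound $2e^{-Ct^2 i}$ of \Cref{lem:prob_subGaussian}, and sums to obtain $2\sum_i D_i e^{-Ct^2 i}=2g(e^{-Ct^2})$. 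You instead run a single Chernoff bound on the doubly-thinned sum $\sum_r\eta_r\sum_{i\in\calG_r}\zeta_i(\langle\veca_i,\vecz\rangle^2-1-\delta)$, using a per-row MGF estimate and factorizing over relays; conditioning on $\calT$ in your computation shows you are evaluating exactly the same object, namely $\bbE\bigl[(e^{-C\delta^2})^{\lv\calT\rv}\bigr]$, i.e.\ the probability generating function of $\lv\calT\rv$ at one point. The practical difference is that you never need the explicit coefficients $D_i$ (so \Cref{lem:generatortree} disappears), at the cost of re-deriving the per-row subexponential MGF bound that sits underneath \Cref{lem:prob_subGaussian} rather than citing the packaged two-sided tail estimate; your route arguably generalizes more transparently to other erasure dependence structures, while the paper's route isolates the combinatorics of $\lv\calT\rv$ in a reusable lemma (which it indeed reuses for the serial-star case). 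Two minor bookkeeping points, neither of which is a gap: the net reduction replaces $\delta$ by $t=(1-2\rho)\delta$ inside the exponent, which you correctly defer to the star-topology argument and which is absorbed into $C$; and your remark that the $\eta_r=0$ branch is ``bounded by $1$'' is really an exact evaluation, since that branch contributes the factor $1$ identically, giving the block factor $1-q+q(1-p+pe^{-C\delta^2})^K$ with no inequality needed at that step.
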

\begin{proof}
See \Cref{app:riptree}.
\end{proof}

An immediate consequence of the above result is a corollary similar to \Cref{cor:noisy} which guarantees faithful recovery of a sparse vector when the bound $\btree$ in \eqref{eq:m_tree} is satisfied. We omit the statement of the corollary to avoid repetition. Also, as we have mentioned in \Cref{sub:sparse_recovery}, if $R$ is smaller than the bound $\btree$ in \eqref{eq:m_tree}, the result guarantees that the algorithm is guaranteed to recover a sparse approximation of the unknown vector. Another interesting corollary of \Cref{thm:riptree} is as follows:
\begin{cor}\label{cor:riptree}
Consider a more generalized setting of the star topology with $R$ sensors and the sensors have $K$ measurements each, whose measurement model is given by \eqref{eq:sys_star}. Each link in the network is modeled using an independent Bernoulli erasure channel with probability of observability $p$. Suppose $\matA$ is a subGaussian random matrix with parameter $c$. Let $\calT\subseteq\{0,1,\ldots,m\}$ be the set of indices of measurements observed at the fusion center. Then, for any $\epsilon>0$, if 
\begin{equation}
R\geq \bstar(p,\sqrt{K}\delta),
\end{equation}
the {RIC} $\delta_s$ of $\lv\calT\rv^{-1}\matA_{\calT}$ given in \eqref{eq:sys_effective} satisfies $\delta_s < \delta$ for all $0<\delta<1$ with probability at least $1-\epsilon$. Here, $C>0$ is a constant dependent only on $c$. 
\end{cor}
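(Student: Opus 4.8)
The plan is to read off \Cref{cor:riptree} as the special case of \Cref{thm:riptree} in which the first hop is noiseless. First I would set up the modelling equivalence: the ``generalized star'' with $R$ sensors, each carrying a block of $K$ rows of $\matA$ and connected to the fusion center by a single $\Bern(p)$ erasure link, is precisely the two-hop tree model of \Cref{sec:tree} with $R$ relays, $K$ sub-sensors per relay, sensor-to-relay probability of observability $p_{\mathrm{tree}}=1$, and relay-to-fusion-center probability of observability $q_{\mathrm{tree}}=p$. Indeed, with a perfectly reliable first hop, each of the $K$ sub-sensor rows of a given relay reaches that relay with probability one, so the entire block of $K$ rows survives at the fusion center iff that relay's uplink is up, an event of probability $p$, independent across the $R$ relays --- exactly the erasure mechanism of the generalized star. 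Hence $(\matA,\calT)$, and therefore the RIC of $\lv\calT\rv^{-1}\matA_{\calT}$, have the same distribution in the two models.

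Next I would apply \Cref{thm:riptree} to this tree with $p_{\mathrm{tree}}=1$, $q_{\mathrm{tree}}=p$: it yields $\delta_s<\delta$ with probability at least $1-\epsilon$ whenever $R\geq\btree(1,p,\delta,K)$, and it remains only to simplify that bound. Substituting $p_{\mathrm{tree}}=1,\ q_{\mathrm{tree}}=p$ into \eqref{eq:m_tree} collapses the inner term,
\begin{equation*}
1-q_{\mathrm{tree}}+q_{\mathrm{tree}}\lb 1-p_{\mathrm{tree}}+p_{\mathrm{tree}}e^{-C\delta^2}\rb^{K}
= 1-p+p\,e^{-CK\delta^2}
= 1-p+p\exp\!\lb-C(\sqrt{K}\delta)^2\rb,
\end{equation*}
while the factor $\tfrac{4}{3}s\ln(eN/s)+\tfrac{14}{3}s+\tfrac{4}{3}\ln 2\epsilon^{-1}$ is unchanged; comparing with \eqref{eq:m_bound} shows $\btree(1,p,\delta,K)=\bstar(p,\sqrt{K}\delta)$, which is the assertion.

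There is essentially no analytic obstacle once \Cref{thm:riptree} is available; the only thing that needs genuine care is the modelling step --- verifying that bundling $K$ rows into a single all-or-nothing erasure is reproduced exactly by a tree relay with a noiseless uplink, and that the boundary value $p_{\mathrm{tree}}=1$ is admissible (it is, since \Cref{thm:riptree} allows $p,q\in[0,1]$). One should also note that when $K\delta^2\geq1$ the argument $\sqrt{K}\delta$ can exceed $1$; this is harmless, because $\bstar(p,\sqrt{K}\delta)$ stays finite and positive (for $p=1$ it equals $CK\delta^2$ times the bracketed factor, and for $p<1$ it is finite since $1-p+p\,e^{-CK\delta^2}\geq 1-p>0$), so the displayed bound on $R$ remains a valid sufficient condition for $\delta_s<\delta$. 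Heuristically, the $\sqrt{K}$ inflation of the admissible radius simply reflects that a size-$K$ super-measurement is $K$ times as informative as a single row, so its squared norm concentrates $K$ times faster about its mean.
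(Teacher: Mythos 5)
Your proposal is correct and follows exactly the paper's route: the paper likewise proves \Cref{cor:riptree} by identifying the generalized star with the two-hop tree having a perfect first hop and substituting $p=1$, $q=p$ into \Cref{thm:riptree}, whereupon the inner term collapses to $1-p+p\exp\lb-C(\sqrt{K}\delta)^2\rb$ and the bound becomes $\bstar(p,\sqrt{K}\delta)$. Your explicit algebraic verification and the remark about $\sqrt{K}\delta$ possibly exceeding $1$ are welcome additions but do not change the argument.
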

\begin{proof}
If we assume that $p=1$ for tree  topology setting in \Cref{thm:riptree}, the channels between the relay and the sensors in tree  topology are perfect. Then, the missing data are due to the uncertainty in the channel between the relay and the fusion center. This is equivalent to a star topology where each sensor has $K$ measurements, and the probability of observing data ($K-$length measurement vector) from a sensor is $q$. Thus, the result follows if we substitute $p=1$ and $q=p$ in \Cref{thm:riptree}.  
\end{proof}
We further note that if $K=1$, \Cref{cor:riptree} reduces to \Cref{thm:rip}. The corollary establishes a surprising relation between $\delta$ and the number of measurements per sensor $K$. The number of sensors required to ensure an RIC of $\delta$ for the star topology  with $K$ measurements per sensor is the same as those required to obtain an RIC of $\sqrt{K}\delta$ for the star topology with one measurement per sensor (note that $\bstar(p,\sqrt{K}\delta)< \bstar(p,\delta)$, for $K>1$). 

The other key insights from \Cref{thm:riptree} are presented in the following subsections:

\subsection{Nature of the bound} \label{sub:nature}
Unlike the star topology, the sufficient condition \eqref{eq:m_tree} in \Cref{thm:riptree} does not bound the number of measurements $m=RK$ directly. This result is expected due to the dependence of the observability of the measurements. Suppose that we increase $m$ by increasing $K$ while keeping $R$ constant. In that case, all the relays fail to send their measurements to the fusion center with probability $(1-q)^R$, i.e., none of the measurements reach the fusion center with probability $(1-q)^R$. Therefore, the probability of perfect recovery of the sparse vector can be driven to an arbitrarily large value only by increasing the number of relays $R$. As a result, the sufficient condition for exact recovery gives a lower bound on $R$. However, as we increase $R$, naturally the number of measurements $m=RK$ also increases.

Even though the bound \eqref{eq:m_tree} does not bound $K$, clearly the bound on $R$ diminishes with an increasing value of $K$ when the other parameters are kept constant. This trend is evident from the fact that $1-p+p e^{-C\delta^2}<1$, for all values of $p,\delta\in(0,1]$. The dependence is intuitive because as $K$ increases, more information is available to the system and consequently, the bound on the number of relays reduces.
 
Since the bound $\btree$ in \eqref{eq:m_tree} is a decreasing function of $K$, it also leads to a lower bound on $R$ which we obtain by taking $K\to\infty$:
\begin{equation}\label{eq:lowerbound_relay}
R\geq \ls\ln\frac{1}{ 1-q}\rs^{-1}\ls \frac{4}{3}s\ln\lb \frac{eN}{s}\rb+\frac{14}{3}s+\frac{4}{3}\ln 2\epsilon^{-1}\rs.
\end{equation}
The above bound is in agreement with our earlier point that there is a lower bound on $R$ that ensures perfect recovery of the sparse vector, irrespective of the value of $K$.
\subsection{Special cases}
We consider the following special cases:

\begin{itemize}
\item \emph{$p=1$ and $q=1$: } In this case, the bound $\btree$ in \eqref{eq:m_tree} simplifies as follows:
\begin{equation}
m=RK\geq \beta(\delta),
\end{equation}
where $\beta$ defined in \eqref{eq:std_bound} corresponds to the classical result which gives a bound on the number of measurements required for sparse signal recovery.
When $p=q=1$, none of the measurements are missing. Thus, this scenario is equivalent to having $RK$ measurements in the sparse recovery problem setting when there is no missing data. 
\item \emph{$q=1$: } For this setting, the bound $\btree$ in \eqref{eq:m_tree} reduces to
\begin{equation}
m=RK\geq \bstar(p,\delta),
\end{equation}
where $\bstar$ defined in \eqref{eq:m_bound}  corresponds to the star topology. When $q=1$, the channel between the relay and the fusion center is always reliable, i.e., all the sensors are connected to the  fusion center through a Bernoulli erasure channel with probability of observability being $p$. Therefore, this setting reduces to the star topology.
\item \emph{$p=0$ or $q=0$ or $K=0$: } When either of $p$, $q$ or $K$ is zero, the bound gives $m\geq \infty$. This is expected as in this setting, none of the measurements reach the fusion center and hence, the recovery is not possible.
\item \emph{$K=1$: } For this case, the bound $\btree$ in \eqref{eq:m_tree} simplifies as follows:
\begin{equation}
R\geq \bstar(pq,\delta).
\end{equation}
When $K=1$, the system is equivalent to having all the sensors connected to the fusion center via independent Bernoulli erasure channels with probability of observability as $pq$. 
\end{itemize} 
\subsection{Dependence on parameters}
Similar to the bound $\bstar$ in \eqref{eq:m_bound} for the star topology and the bound $\beta$ in \eqref{eq:m_bound} for sparse recovery without missing data, the bound $\btree$ in \eqref{eq:m_tree} scales $\approx \calO(s\ln (N/s)/\delta^2)$ if we keep the other parameters constant. We have already seen that $\btree(p,q,\delta,K)$ is a decreasing function of $K$ and $\delta$. The next result characterizes the dependence of the bound $\btree$ on the probabilities $p$ and $q$.
\begin{prop}\label{prop:ratiotree}
Let $\btree$,  $\bstar$ and $\beta$ be as defined in \eqref{eq:m_tree}, \eqref{eq:m_bound}, and \eqref{eq:std_bound}, respectively. Then, the following statements hold:
\begin{enumerate}[label={(\roman*)}]
\item $\btree(p,q,\delta,K)\geq \bstar(q,\sqrt{K}\delta)\geq\beta(\sqrt{K}\delta)/q$, for all $p\in[0,1]$ and $\delta\in(0,1)$, and all values of $K$.
\item $\btree(p,q,\delta,K)$ is a strictly decreasing function of $p$ and $q$ for a fixed value of $\delta$ and $K$.
\end{enumerate}
\end{prop}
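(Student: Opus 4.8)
The plan is to reduce everything to elementary inequalities about the single scalar $\mu(p,q,\delta,K)\triangleq 1-q+q\big(1-p+pe^{-C\delta^2}\big)^{K}$ sitting in the denominator of \eqref{eq:m_tree}. Writing $B\triangleq\frac{4}{3}s\ln(eN/s)+\frac{14}{3}s+\frac{4}{3}\ln 2\epsilon^{-1}>0$ for the common bracket, we have $\btree(p,q,\delta,K)=B/\ln\frac{1}{\mu(p,q,\delta,K)}$, and likewise $\bstar(q,\sqrt{K}\delta)=B/\ln\frac{1}{1-q+qe^{-CK\delta^2}}$ and $\beta(\sqrt{K}\delta)=B/(CK\delta^2)$, using $e^{-C(\sqrt{K}\delta)^2}=e^{-CK\delta^2}$. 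Since $x\mapsto\ln(1/x)$ is positive and strictly decreasing on $(0,1)$ and $B>0$, each comparison of two of these bounds is equivalent to comparing the corresponding scalars in the reverse order. The degenerate cases $p=0$, $q=0$, or $K=0$ make the scalar equal $1$ and the bound $+\infty$, matching the ``$m\ge\infty$'' special cases already discussed, so I would dispose of them first and assume $p,q\in(0,1)$, $K\ge1$ thereafter.

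For statement (i), put $h(p)\triangleq 1-p+pe^{-C\delta^2}$, which is a convex combination of $1$ and $e^{-C\delta^2}$, hence $h(p)\ge e^{-C\delta^2}$ for every $p\in[0,1]$; raising to the $K$-th power, scaling by $q$ and adding $1-q$ gives $\mu(p,q,\delta,K)\ge 1-q+qe^{-CK\delta^2}$, which after applying $\ln(1/\cdot)$ and dividing $B$ is precisely the first inequality $\btree(p,q,\delta,K)\ge\bstar(q,\sqrt{K}\delta)$. The second inequality $\bstar(q,\sqrt{K}\delta)\ge\beta(\sqrt{K}\delta)/q$ is \Cref{prop:ratio}(i) under the substitution $p\mapsto q$, $\delta\mapsto\sqrt{K}\delta$; I would note that the proof of that statement uses only the scalar bound $u^{p}\le 1-p+pu$ for $u=e^{-C\delta^2}\in(0,1]$ and $p\in[0,1]$ (the tangent-line bound for the concave map $u\mapsto u^{p}$), which is valid whether or not $\sqrt{K}\delta<1$, so it applies for all $K$.

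For statement (ii), it suffices to show $\mu(p,q,\delta,K)$ is strictly decreasing in each of $p$ and $q$, since then $\ln(1/\mu)$ is strictly increasing and $\btree=B/\ln(1/\mu)$ strictly decreasing. For $p$: $h(p)=1-p\big(1-e^{-C\delta^2}\big)$ is strictly decreasing in $p$ because $1-e^{-C\delta^2}>0$ for $\delta\in(0,1)$; composing with the increasing map $t\mapsto t^{K}$ on $[0,\infty)$ and then with the increasing map $r\mapsto 1-q+qr$ (as $q>0$) shows $\mu$ is strictly decreasing in $p$. For $q$: rewrite $\mu=1-q\big(1-h(p)^{K}\big)$, and since $h(p)<1$ whenever $p>0$ we have $1-h(p)^{K}>0$, so $\mu$ is strictly decreasing in $q$. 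As in \Cref{prop:ratio}(ii), ``strictly decreasing'' is read with the convention that $+\infty$ exceeds every finite value, which handles the endpoints $p=0$ and $q=0$.

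The argument is essentially bookkeeping once this scalar reduction is set up, so I do not anticipate a genuine obstacle; the only points that need a moment of care are invoking \Cref{prop:ratio}(i) at an argument $\sqrt{K}\delta$ that may exceed $1$ --- resolved by the remark that its proof is range-agnostic --- and tracking the degenerate endpoints so that both the monotonicity in (ii) and the chain of inequalities in (i) hold over the full stated ranges.
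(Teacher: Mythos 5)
Your proposal is correct and follows essentially the same route as the paper: the paper also reduces both statements to monotonicity of the scalar $1-q+q(1-p+pe^{-C\delta^2})^K$ inside the logarithm, obtains the first inequality of (i) by comparing with the value at $p=1$ (which equals your convex-combination bound $1-p+pe^{-C\delta^2}\ge e^{-C\delta^2}$ raised to the $K$-th power), and invokes \Cref{prop:ratio}(i) with $q$ and $\sqrt{K}\delta$ for the second. Your added remarks --- that \Cref{prop:ratio}(i) remains valid when its second argument $\sqrt{K}\delta$ exceeds $1$, and the explicit handling of the endpoints $p=0$, $q=0$ --- are details the paper omits but do not change the argument.
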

\begin{proof}
Statement (i) follows from Statement (ii), the fact that $\bstar(1,q,\delta,K) = \beta(q,\sqrt{K}\delta)$ and \Cref{prop:ratio}. The proof of Statement (ii) is similar to the proof of \Cref{prop:ratio} given in \Cref{app:ratio}. So we omit the details here.
\end{proof}
Here, $\bstar(q,\sqrt{K}\delta)$ corresponds to the bound for the generalized star topology discussed in \Cref{cor:riptree}. Thus, Statement (i) of \Cref{prop:ratiotree} indicates that the recovery performance is better for the generalized star topology compared to the tree topology, when the number of measurements in the network $m=RK$ is the same. This is intuitive as the tree topology has a two-hop setting, and hence, the probability of missing data is higher. Also, Statement (ii) of \Cref{prop:ratiotree} is expected because as $p$ or $q$ decreases, more measurements are likely to be missing, and hence the bound $\btree$ increases.

\section{Serial-star topology}\label{sec:line}
\begin{figure}[ht]
\begin{tikzpicture}

\tikzstyle{cloud} = [draw,  fill=colorblue!30,circle,text centered,scale = 1,line width=2pt,text=black,minimum height = 0.5cm,node distance = 1.5cm]
\tikzstyle{box} = [ fill=colorgreen,ellipse,  rounded corners, text centered,node distance = 1cm,line width=2pt]
\tikzstyle{line} = [->,line width=1pt]
\tikzstyle{comment} = [midway, above, sloped]

\color{colorblue}
  \node[cloud] (N1K) {$K$};
  \node[below of = N1K,node distance = 1.5cm]   (N1d) {$\vdots$};
  \node[below of = N1d, cloud]   (N12) {$2$};
  \node[below of = N12, cloud]   (N11) {$1$};
  \node[below of = N11]   (N1) {Branch $1$};

\color{colorred}   
  \draw [line] (N1d) -- (N1K);
  \draw [line] (N12) -- (N1d);
  \draw [line] (N11) -- (N12);  


\color{colorblue}
  \path (N1K)+(2,0) node[cloud] (N2K) {$K$};
  \node[below of = N2K,node distance = 1.5cm]   (N2d) {$\vdots$};
  \node[below of = N2d, cloud]   (N22) {$2$};
  \node[below of = N22, cloud]   (N21) {$1$};
  \node[below of = N21]   (N2) {Branch $2$};

\color{colorred}   
  \draw [line] (N2d) -- (N2K);
  \draw [line] (N22) -- (N2d);
  \draw [line] (N21) -- (N22);
  
\color{colorblue}
  \path (N2K)+(2.5,0) node (N3K) {$\ldots$};
  \path (N2K)+(5,0) node[cloud] (N3K) {$K$};
  \node[below of = N3K,node distance = 1.5cm]   (N3d) {$\vdots$};
  \node[below of = N3d, cloud]   (N32) {$2$};
  \node[below of = N32, cloud]   (N31) {$1$};
  \node[below of = N31]   (N3) {Branch $R$};

\color{colorred}   
  \draw [line] (N3d) -- (N3K);
  \draw [line] (N32) -- (N3d);
  \draw [line] (N31) -- (N32);

\path (I2)+(0.5,2) node[box] 
  (F) {\textcolor{white}{\textbf{Fusion Center}}};
  \draw [line] (N1K) -- (F); 
  \draw [line] (N2K) -- (F);
  \draw [line] (N3K) -- (F);

 \draw[coloryellow,line width=1pt,dotted] ($(N1K)+(-0.75,0.75)$)  rectangle ($(N11)+(0.75,-1.5)$);
 \draw[coloryellow,line width=1pt,dotted] ($(N2K)+(-0.75,0.75)$)  rectangle ($(N21)+(0.75,-1.5)$);
 \draw[coloryellow,line width=1pt,dotted] ($(N3K)+(-0.75,0.75)$)  rectangle ($(N31)+(0.75,-1.5)$);  
       
\color{black}
\path (N11)+(2,-2) node (R) {\textcolor{colorred}{\large$\to$} Bernoulli erasure channels $\sim \Bern (p)$};
\end{tikzpicture}
\caption{Serial-star topology with $R$ branches and $K$ sensors in each branch.}
\label{fig:line}
\end{figure}
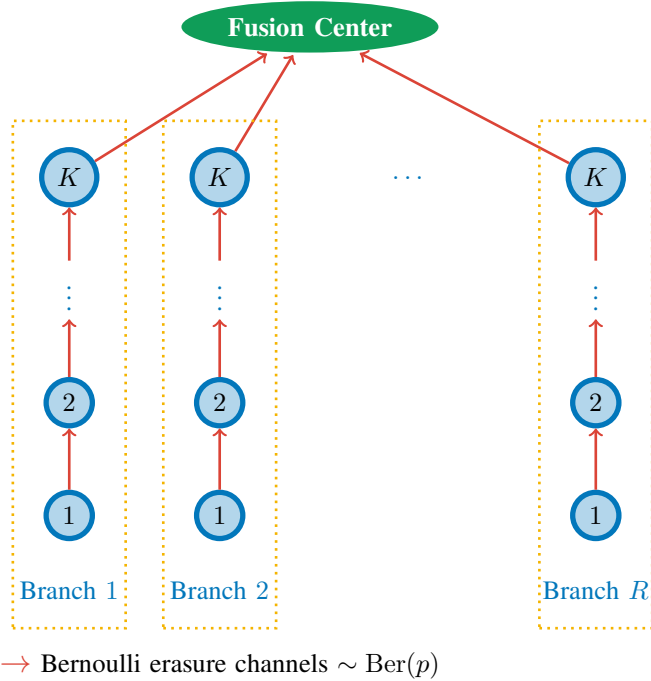
In this section, we consider a multi-hop serial-star topology with $m$ sensor nodes which are arranged in $R$ branches with each branch consists of $K$ hops, i.e., $m=RK$, as  shown in \Cref{fig:line}.  This topology is used when a large number of low-power sensors are deployed across a vast geographical area~\cite{zafeiropoulos2009data,cao2016data,luo2017serial}. When sensors are far away from the fusion center, the power consumed for reliable communication is high, and this affects the lifetime of the sensor network. So, the data is not directly sent from the far-away sensors to the fusion center; instead, it is sent serially through short-range and multi-hop communication.  We assume the conventional way of data aggregation where data are streamed in `sense-and-send' and `receive-and-forward' fashion. To be more specific, in the case of error-free  communication, the first sensor in a branch sends its measurement (for example, $\vecy_{1}\in\bbR$) to the second sensor. The second sensor adds its own measurement to the received measurement and sends (for example, $\begin{bmatrix} \vecy_1 &\vecy_2 \end{bmatrix}\in\bbR^2$) to the third sensor, and so on. Therefore, the $i\nth$ sensor in a branch receives $i-1$ measurements, and sends $i$ measurements to the $(i+1)\nth$ sensor. Finally, the last (or $K\nth$) sensors in all the branches send $K$ measurements each to the fusion center, and so the total number of measurements received at the fusion center is $m=RK$. 

Analogous to the tree topology, here we assume that all the $RK$ channels are independent Bernoulli erasure channels with probability of observability $p$. The measurement from the $i\nth$ sensor of a branch is observed at the fusion center if and only if the data are not missing in all the $K-i+1$ channels between the $i\nth$ sensor and the fusion center which occurs with probability $p^{K-i+1}$. Therefore, the probability of a measurement being observed at the fusion center depends on the position of the sensor in the branch. This property distinguishes the serial-star topology from the other two topologies (star and tree) discussed in \Cref{sec:star} and \Cref{sec:tree}, and  so the serial-star topology demands separate analysis.

Let the indices of the measurements which are observed at the fusion center be $\calT\subset\{1,2,\ldots,m\}$. The unknown sparse vector $\vecx$ can be recovered using the observed sensor measurements given by \eqref{eq:sys_effective}. The following result gives a sufficient condition to ensure the exact recovery of the sparse vector under the above model. 

\begin{theorem}\label{thm:ripline}
Consider a serial-star topological sensor network with $R$ branches and $K$ sensors per branch, whose measurement model is given by \eqref{eq:sys_star}. Each link in the network is modeled using an independent Bernoulli erasure channel with probability of observability $p\in[0,1]$. Suppose $\matA\in\bbR^{m\times N}$ is a subGaussian random matrix with parameter $c$. Let $\calT\subseteq\{0,1,\ldots,m\}$ be the set of indices of measurements observed at the fusion center. Then, for any $\epsilon>0$, if $R\geq \bline$ where
\begin{multline}\label{eq:m_line}
\bline\geq \ls\ln \frac{1-pe^{-C\delta^2}}{1-p+p(1-e^{-C\delta^2})(pe^{-C\delta^2})^K}\rs^{-1}\\
\times\ls \frac{4}{3}s\ln\lb \frac{eN}{s}\rb+\frac{14}{3}s+\frac{4}{3}\ln 2\epsilon^{-1}\rs
\end{multline}
the {RIC} $\delta_s$ of $\lv\calT\rv^{-1}\matA_{\calT}$ given in \eqref{eq:sys_effective} satisfies $\delta_s < \delta$ for all $0<\delta<1$ with probability at least $1-\epsilon$. Here, $C>0$ is a constant dependent only on $c$. 
\end{theorem}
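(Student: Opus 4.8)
The plan is to follow the two-step scheme behind \Cref{thm:rip} and \Cref{thm:riptree}: apply the classical subGaussian RIP concentration estimate \emph{conditionally} on the random set $\calT$ of observed rows, and then average out the erasure randomness. Conditioned on $\{\lv\calT\rv=n\}$ the matrix $\matA_{\calT}$ is an $n\times N$ subGaussian random matrix with parameter $c$, since the rows of $\matA$ are i.i.d.\ and independent of the channels; hence the covering-net bound used to prove \Cref{thm:rip} gives, with the constant $C>0$ and the $(s,N)$-dependent prefactor $\kappa_{s,N}$ from that proof,
\[
\bbP\lc\delta_s(\lv\calT\rv^{-1}\matA_{\calT})\geq\delta \mid \lv\calT\rv=n\rc \leq \kappa_{s,N}\,e^{-C\delta^2 n}.
\]
Averaging over $\calT$ yields $\bbP\lc\delta_s\geq\delta\rc\leq\kappa_{s,N}\,\bbE\big[e^{-C\delta^2\lv\calT\rv}\big]$, so the whole problem reduces to evaluating the moment generating function of $\lv\calT\rv$ at $-C\delta^2$; for the star topology this equals $(1-p+pe^{-C\delta^2})^m$ and reproduces \eqref{eq:m_bound}, while for the serial-star topology it is different.

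\textbf{The law of $\lv\calT\rv$.} Write $\lv\calT\rv=\sum_{r=1}^{R}M_r$, where $M_r$ is the number of measurements of branch $r$ that reach the fusion center, so that the $M_r$ are i.i.d.\ across branches. Label the $K$ hops of a branch so that hop $k$ is crossed exactly by the measurements of sensors $1,\dots,k$; then the measurement of sensor $i$ arrives iff hops $i,i+1,\dots,K$ are all open. Thus, unlike in the star and tree topologies, the per-branch survival events are \emph{nested}: if $F$ denotes the length of the run of open hops at the fusion-center end of the branch (the largest $f$ such that hops $K,K-1,\dots,K-f+1$ are all open, with $F=K$ if the whole branch is open), then exactly the measurements of the $F$ sensors nearest the fusion center survive, i.e.\ $M_r=F$. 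This gives the truncated geometric law $\bbP\{M_r=j\}=(1-p)p^{j}$ for $j=0,\dots,K-1$ and $\bbP\{M_r=K\}=p^{K}$, whence, summing the finite geometric series with $\rho\triangleq pe^{-C\delta^2}$,
\[
\bbE\big[e^{-C\delta^2 M_r}\big]=(1-p)\sum_{j=0}^{K-1}\rho^{j}+\rho^{K}=\frac{1-p+p(1-e^{-C\delta^2})(pe^{-C\delta^2})^{K}}{1-pe^{-C\delta^2}},
\]
and by independence $\bbE\big[e^{-C\delta^2\lv\calT\rv}\big]=\big(\bbE[e^{-C\delta^2 M_1}]\big)^{R}$.

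\textbf{Solving for $R$.} Combining the two displays, $\bbP\lc\delta_s\geq\delta\rc\leq\kappa_{s,N}\big(\bbE[e^{-C\delta^2 M_1}]\big)^{R}$. Since $\bbE[e^{-C\delta^2 M_1}]<1$ whenever $p\in(0,1)$, $\delta\in(0,1)$, $K\geq1$ (the one elementary inequality that needs checking; it is also what makes the logarithm in \eqref{eq:m_line} positive and the bound finite), taking logarithms and solving $\kappa_{s,N}\big(\bbE[e^{-C\delta^2 M_1}]\big)^{R}\leq\epsilon$ for $R$ — the same rearrangement as in \Cref{thm:rip}, with $\ln(1-p+pe^{-C\delta^2})$ replaced by $\ln\bbE[e^{-C\delta^2 M_1}]$ — gives $R\geq\bline$ with $\bline$ as in \eqref{eq:m_line}, the bracketed factor there being exactly $\ln(\kappa_{s,N}/\epsilon)$. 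As a sanity check one verifies the degenerate cases: $p=1$ collapses the bound to $RK\geq\beta(\delta)$ (nothing is lost, so $\lv\calT\rv=RK$ deterministically), $K=1$ collapses it to $R\geq\bstar(p,\delta)$ (a one-hop serial-star is a star), and $p\to0$ or $K\to0$ makes it diverge.

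\textbf{Main obstacle.} The delicate step is identifying the law of $\lv\calT\rv$: in the star and tree topologies the surviving rows form (blockwise) a product of independent Bernoulli trials, whereas here the survival of measurements \emph{within} a branch is a monotone, nested family of events governed by a single first-failure position along the branch, and the argument must correctly collapse this to the truncated geometric distribution of $M_r$ before the generating function becomes an elementary geometric sum. Everything afterwards — the conditioning/averaging split and the logarithmic rearrangement — is the same bookkeeping as in the proof of \Cref{thm:rip}.
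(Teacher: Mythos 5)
Your proof is correct and follows essentially the same route as the paper: condition on $\lv\calT\rv$, apply the subGaussian net bound to get a factor $e^{-C\delta^2\lv\calT\rv}$, identify the per-branch count as truncated geometric via the nested survival events, and reduce everything to the geometric sum $(1-p)\sum_{j=0}^{K-1}(pe^{-C\delta^2})^j+(pe^{-C\delta^2})^K$. The only cosmetic difference is that the paper packages this last step as a full probability generating function $h(x)$ (Lemma 4) and evaluates it at $x=e^{-Ct^2}$, whereas you evaluate the moment generating function of $\lv\calT\rv$ directly at the single point needed; the computations coincide.
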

\begin{proof}
See \Cref{app:ripline}.
\end{proof}

The above RIP-based result immediately implies a result similar to \Cref{cor:noisy} which ensures faithful recovery of a sparse vector when the bound in \Cref{thm:ripline}  is satisfied. We omit the statement of the corollary to avoid repetition. We next establish the monotonically decreasing nature of $\bline$ with the probability of observability $p$:
\begin{prop}\label{prop:ratioline}
Let $\bline$ and $\beta$ be as defined in \eqref{eq:m_line} and \eqref{eq:std_bound}. Then, the following statements hold:
\begin{enumerate}[label={(\roman*)}]
\item $\bline(p,\delta,K)\geq\bline(p,\sqrt{K}\delta)\geq \beta(\sqrt{K}\delta)/p$, for all $p\in[0,1]$, $\delta\in(0,1)$ and all values of $K$.
\item $\bline(p,\delta,K)$ is a strictly decreasing function of $p$ for a fixed value of $\delta$ and $K$.
\end{enumerate}
\end{prop}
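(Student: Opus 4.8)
The plan is to reduce both parts to a single elementary fact: for every integer $n\ge 1$, the polynomial $\phi_{n}(u)\triangleq 1-(n+1)u^{n}+nu^{n+1}$ is strictly positive on $(0,1)$ and vanishes at $u=1$. This I would dispatch at the outset, since $\phi_{n}(0)=1$, $\phi_{n}(1)=0$, and $\phi_{n}'(u)=n(n+1)u^{n-1}(u-1)<0$ on $(0,1)$, so $\phi_{n}$ decreases strictly and stays above its value at $u=1$. I would then fix notation: write $a\triangleq e^{-C\delta^{2}}$, which lies in $(0,1)$ because $\delta>0$, and let $\Lambda\triangleq\frac{4}{3}s\ln(eN/s)+\frac{14}{3}s+\frac{4}{3}\ln 2\epsilon^{-1}>0$ be the bracketed factor common to \eqref{eq:m_line}, \eqref{eq:m_bound} and \eqref{eq:std_bound}. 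Since $\Lambda$ does not depend on $p$ and each of $\bline$, $\bstar$, $\beta$ equals $\Lambda$ times a reciprocal logarithm, every comparison I need becomes the reverse comparison of the corresponding logarithms (a larger logarithm gives a smaller bound). Only $p\in(0,1]$ and $K\ge 1$ matter; for $p=0$ or $K=0$ all bounds equal $+\infty$.

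\emph{Statement (ii).} It suffices to prove that $g(p)\triangleq\ln\dfrac{1-ap}{\,1-p+(1-a)a^{K}p^{K+1}\,}$ is strictly increasing on $(0,1]$, for then $\bline=\Lambda/g$ is strictly decreasing in $p$; here I have used $p(1-a)(pa)^{K}=(1-a)a^{K}p^{K+1}$. After checking that the numerator $1-ap$ and the denominator $D(p)\triangleq 1-p+(1-a)a^{K}p^{K+1}$ are positive on $(0,1]$ (the denominator equals $(1-a)a^{K}>0$ at $p=1$), I would differentiate and clear the positive factor $D(p)(1-ap)$: then $g'(p)$ has the sign of $-D'(p)(1-ap)-aD(p)$, and a short expansion should collapse this to $(1-a)\,\phi_{K}(ap)$. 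Since $ap\in(0,1)$, the opening fact gives $\phi_{K}(ap)>0$, hence $g'(p)>0$, which proves (ii).

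\emph{Statement (i).} The inequality $\bstar(p,\sqrt{K}\delta)\ge\beta(\sqrt{K}\delta)/p$ is exactly \Cref{prop:ratio}(i) with $\delta$ replaced by $\sqrt{K}\delta$, i.e.\ $1-p+pa^{K}\ge(a^{K})^{p}$ (convexity of $t\mapsto(a^{K})^{t}$), so I would just cite it. For the remaining inequality $\bline(p,\delta,K)\ge\bstar(p,\sqrt{K}\delta)$, the logarithm in $\bstar(p,\sqrt{K}\delta)$ has argument $(1-p+pa^{K})^{-1}>1$, while the one in $\bline$ has argument $\tfrac{1-pa}{1-p+p(1-a)(pa)^{K}}$, which exceeds $1$ because its numerator minus its denominator equals $p(1-a)\big(1-(pa)^{K}\big)\ge 0$; both logarithms being positive, the claim is equivalent to
\[
\frac{1-pa}{\,1-p+p(1-a)(pa)^{K}\,}\;\le\;\frac{1}{\,1-p+pa^{K}\,}.
\]
I would cross-multiply (both denominators positive), move everything to one side, divide out the common factor $1-p$ (using $\sum_{i=0}^{j-1}p^{i}=(1-p^{j})/(1-p)$), simplify, and divide out a further factor $1-a$; I expect the residue to be the term-by-term bound $\sum_{i=0}^{K-2}a^{i}\ge a^{K-1}\sum_{j=1}^{K-1}p^{j}$, which holds since $a^{i}\ge a^{K-1}\ge p^{j}a^{K-1}$ for every $0\le i\le K-2$ and $1\le j\le K-1$. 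Summing completes (i).

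The conceptual content is slight, so the main obstacle is purely the bookkeeping: in (i), carrying the cross-multiplication far enough that the generic factors $1-p$ and $1-a$ come out cleanly and the remainder is recognized as the term-by-term inequality; and in (ii), verifying that $-D'(p)(1-ap)-aD(p)$ is indeed $(1-a)\phi_{K}(ap)$. Both are routine but a little lengthy; once in place, (i) and (ii) follow from the one-paragraph monotonicity fact above together with \Cref{prop:ratio}.
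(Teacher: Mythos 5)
Your proposal is correct, and its architecture largely coincides with the paper's: you rightly read the middle term of Statement (i) as $\bstar(p,\sqrt{K}\delta)$ (the printed $\bline(p,\sqrt{K}\delta)$ is a typo), reduce (i) to $\bline(p,\delta,K)\geq\bstar(p,\sqrt{K}\delta)$ plus an invocation of \Cref{prop:ratio}, and reduce (ii) to the positivity on $(0,1)$ of $\phi_K(u)=1-(K+1)u^K+Ku^{K+1}$, which is exactly the paper's $f_4$, handled by the same derivative computation. The step you flag as the main risk in (ii) does check out: with $D(p)=1-p+(1-a)a^Kp^{K+1}$ one indeed gets $-D'(p)(1-ap)-aD(p)=(1-a)\phi_K(ap)$, which is the same algebra the paper performs on $f_3=e^{g}$ rather than on $g=\ln f_3$. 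The only genuine divergence is how the key inequality of (i) is proved. After cross-multiplying, both you and the paper arrive at $1-p-a^{K-1}(1-p^K)+a^K(p-p^K)\geq 0$ (the paper's $f_1(p,d)\geq 0$ with $d=a=e^{-C\delta^2}$). The paper proves this by calculus, showing $\partial f_1/\partial d<0$ on $(0,1)$ after bounding the critical point via the auxiliary function $f_2(p)=K-1-Kp+p^K$, and then evaluating at $d=1$. You instead factor out $(1-p)(1-a)$ and reduce the residue to $\sum_{i=0}^{K-2}a^i\geq a^{K-1}\sum_{j=1}^{K-1}p^j$, which holds term by term; I verified the factorization comes out exactly as you predict. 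Your route is more elementary and makes the degenerate cases $p=1$, $a\to 1$, $K=1$ transparent, at the cost of a bit more bookkeeping; the paper's is shorter to write down. Either argument is acceptable.
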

\begin{proof}
See \Cref{app:ratioline}.
\end{proof}
Statement (i) of \Cref{prop:ratioline} implies that the bound on the required number of measurements is larger for the serial-star topology compared to the generalized star topology discussed in \Cref{cor:riptree}. This observation is similar to Statement (i) of \Cref{cor:riptree}, and is due the multi-hop setting of the serial-star topology. Also, Statement (ii) of \Cref{prop:ratioline} is intuitive as $p$ increases, the fusion center is more likely to receive the sufficient number of measurements with a smaller value of $R$. Consequently, the bound decreases.

The other key inferences from \Cref{thm:ripline} are as follows:
\subsection{Special cases}
We consider the following special cases:

\begin{itemize}
\item \emph{$p=1$: } In this case, the bound $\bline$ in \eqref{eq:m_line} simplifies as follows:
\begin{equation}
m=RK\geq \beta(\delta),
\end{equation}
where $\beta$ defined in \eqref{eq:std_bound} corresponds to the classical result  for the conventional sparse signal recovery problem with no missing data. 
This connection is obvious because when $p=1$, none of the measurements are missing, and so exactly $RK$ measurements are available at the fusion center.
 \item \emph{$p=0$ or $K=0$: } When either of $p$ or $K$ is zero, the bound gives $m\geq \infty$ because none of the measurements reach the fusion center. Therefore, the sparse signal  recovery fails with probability one.
\item \emph{$K=1$: } For this case, the bound in \eqref{eq:m_line} simplifies as follows:
\begin{equation}
R\geq \bstar(p,\delta),
\end{equation}
where $\bstar$ defined in \eqref{eq:m_bound} corresponds to the star topology. This relation is expected as  the system is equivalent to having all sensors directly connected to the fusion center via independent Bernoulli erasure channels with probability of observability $p$. 
\end{itemize} 

\subsection{Similarities with the bound for tree topology}
We observe the following similarities between the bound $\btree$ in \eqref{eq:m_tree} for the tree topology and the bound $\bline$ in \eqref{eq:m_line} for the serial-star topology:
\begin{itemize}

\item Similar to the tree topology case, the bound on $R$ becomes smaller when $K$ grows and the other parameters remain unchanged. This dependence is obvious from \eqref{eq:m_line} and the fact that $p e^{-C\delta^2}<1$, for all values of $p,\delta\in(0,1]$. This monotonic behavior of $\bline$ is expected because as $K$ increases, more number of measurements are likely be available at the fusion center, which leads a smaller bound on the number of branches. Therefore, a lower bound on $R$ which which is independent of $K$ is:
\begin{multline}\label{eq:lowerbound_branch}
R\geq \lim_{K\to\infty} \bline = \ls\ln \frac{1-pe^{-C\delta^2}}{1-p}\rs^{-1}\\
\times\ls \frac{4}{3}s\ln\lb \frac{eN}{s}\rb+\frac{14}{3}s+\frac{4}{3}\ln 2\epsilon^{-1}\rs.
\end{multline}

\item The bound $\bline$ in \eqref{eq:m_line} scales $\approx\calO(s\ln (N/s)/\delta^2)$ if we keep the other parameters constant. This statements also hold for tree  topology. Also, for both topologies, the corresponding bounds  strictly decrease as $p$ increases (see \Cref{prop:ratiotree,prop:ratioline}).

\end{itemize}
\section{Numerical Experiments}
\begin{figure}
\begin{center}
\includegraphics[width=8cm]{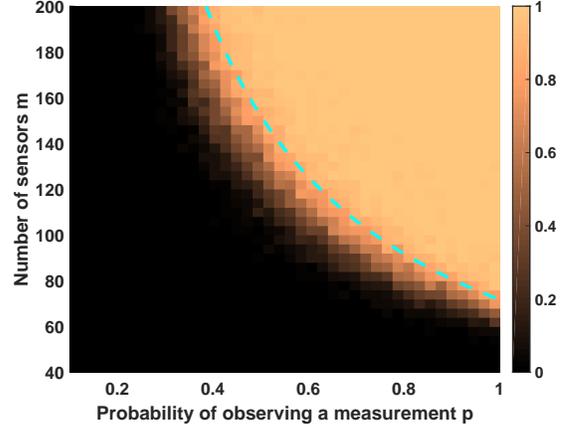}
\end{center}
\caption{Variation of the success probability of LASSO with probability of observability $p$ and the number of sensors for the star topology when $N=200$ and $s=20$. (Cyan dotted line represents the transition in the success probability). The figure corroborates \Cref{thm:rip,prop:ratio} by showing the monotonically increasing nature of the success probability with $p$ and $m$.}
\label{fig:starplot}
\end{figure}

\begin{figure}
\begin{center}
\subfloat[Varying $R$ with $K=10$ (Cyan dotted line represents the transition in the success probability)]
{\label{fig:treeR}
\includegraphics[width=8cm]{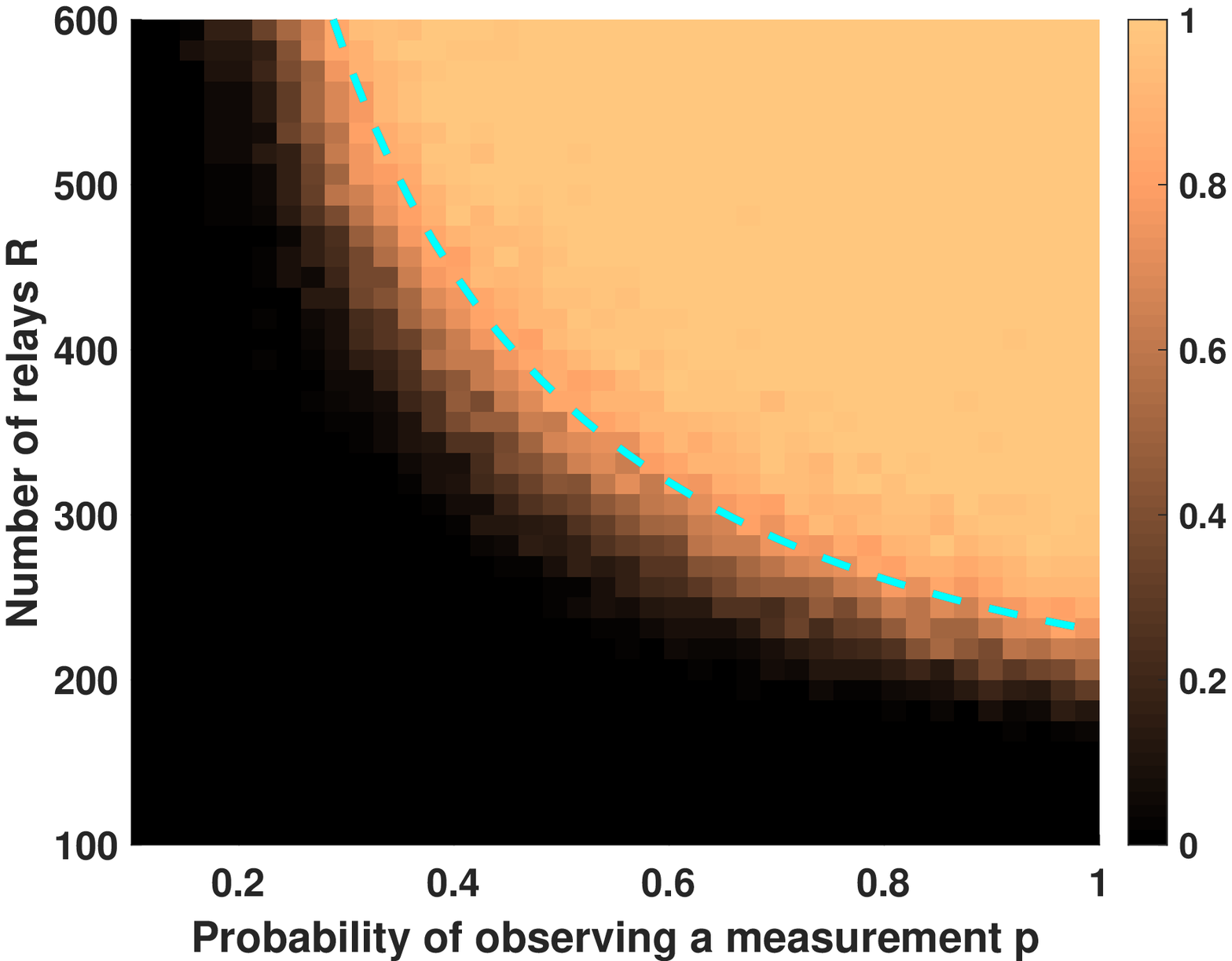}
}

\subfloat[Varying $K$ with $R=1$]
{\label{fig:treeK}
\includegraphics[width=8cm]{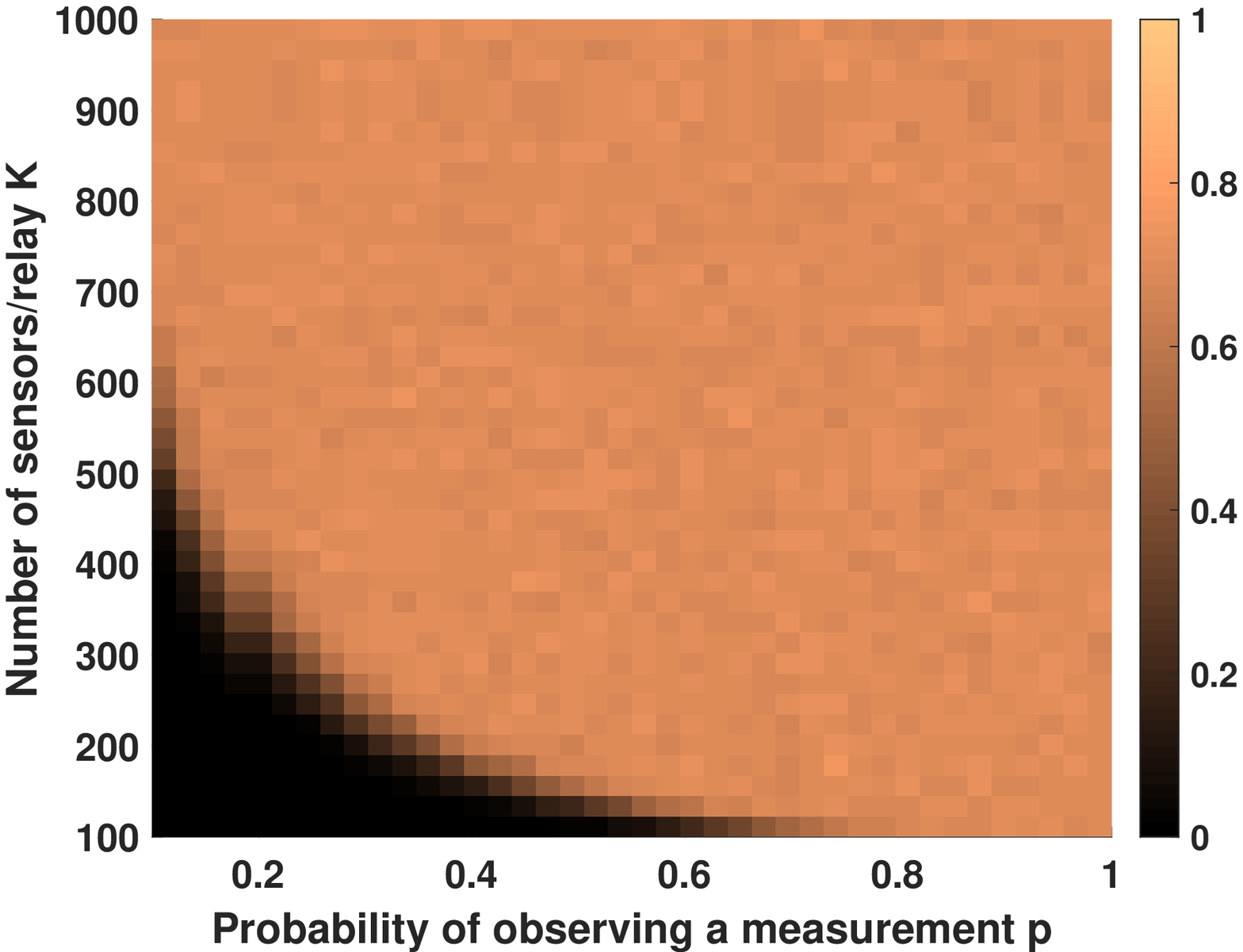}
}
\end{center}
\caption{Variation of the success probability of LASSO with probability of observability $p$ for the tree topology when $N=200,s=20$, and $q=0.7$. The strictly increasing nature of the success probability with $R$ in \Cref{fig:treeR} and the upper bound on the success probability with increasing $K$ in \Cref{fig:treeK} confirm the results in \Cref{thm:riptree,prop:ratiotree}.}
\label{fig:treeplot}
\end{figure}

\begin{figure}
\begin{center}
\subfloat[Varying $R$ with $K=10$ (Cyan dotted line represents the transition in the success probability)]
{\label{fig:lineR}
\includegraphics[width=8cm]{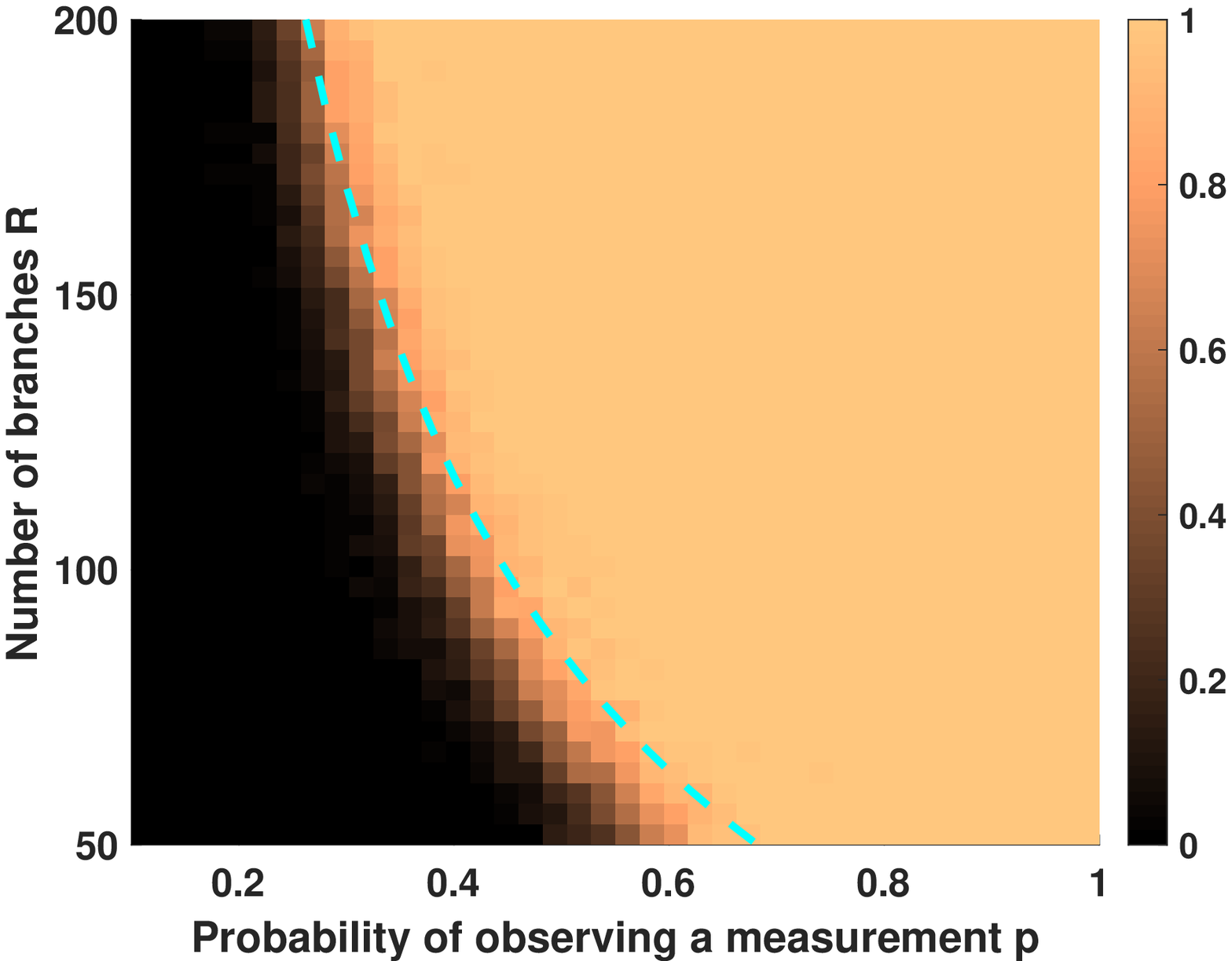}
}

\subfloat[Varying $K$ with $R=5$]
{\label{fig:lineK}
\includegraphics[width=8cm]{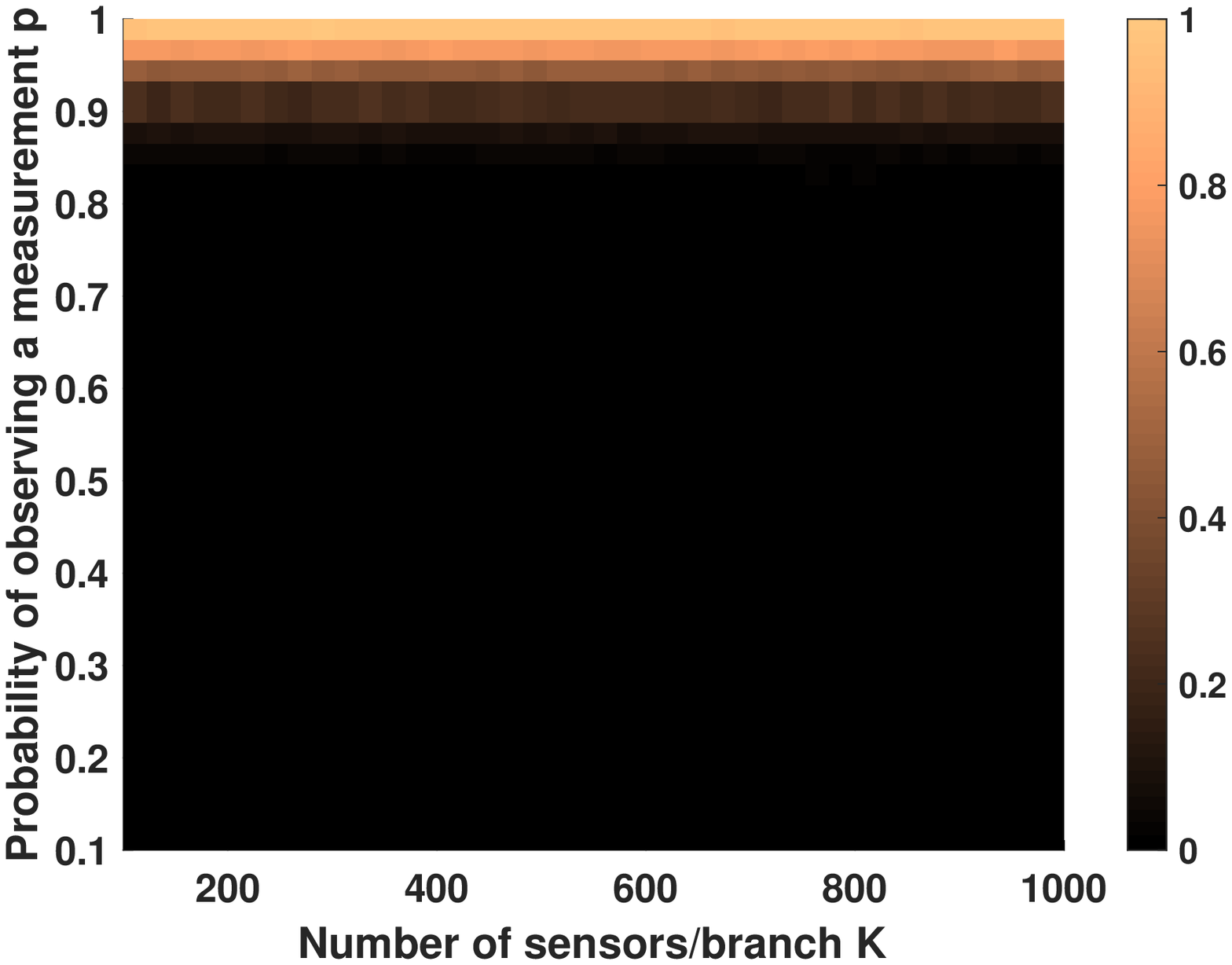}
}
\end{center}
\caption{Variation of the success probability of LASSO with probability of observability $p$ for a serial-star topology when $N=200$ and $s=20$. As the results in \Cref{thm:ripline,prop:ratioline} show, the success probability grows with $R$ while it goes to 1 when $K$ becomes larger for a fixed value of $R$ only when $p$ is close to 1.}
\label{fig:lineplot}
\end{figure}

The recovery performance of the CS algorithm for different sparsity levels and the number of measurements when there is missing data is thoroughly studied in \cite{charbiwala2010compressive,cao2016data} using numerical simulations.  Hence, we focus only on the variation of the success probability of recovery using LASSO with the probability of observability $p$, to corroborate our results in \Cref{sec:star,sec:tree,sec:line}. Our setting is as follows: the length of the unknown sparse vector is $N=200$, and the sparsity $s=20$. We declare that the recovery is successful if $\frac{\lV\vecx-\hat{\vecx}\rV}{\lV\vecx\rV}< 10^{-2}$. The average probability of this event is plotted as a color-map in \Cref{fig:starplot,fig:treeplot,fig:lineplot}. In the plots, the cyan dotted curve represents a function of $p$ shown on the colormap, which indicates the transition in the success probability curve. This function is obtained by fitting a curve to the minimum value on the Y-axis for which the success probability is greater than $0.9$.
\subsection{Star topology}
 The recovery performance for a star topological network with varying values of probability of observability $p$ and the number of sensors $m$ is shown in \Cref{fig:starplot}. As $p$ or $m$ increases, the success probability grows, verifying \Cref{prop:ratio}. Moreover, the value of $m$ for which there is a transition in the success probability approximately changes as $-\frac{\alpha}{\log {1-\lambda p}}$ (obtained by fitting a function of $p$ that is indicated by the  cyan dotted curve in \Cref{fig:lineplot}) with $\alpha=16$ and $\lambda=0.2$. This observation is in agreement with the term $\ls\ln\frac{1}{ 1-p+p\exp(-C\delta^2)}\rs^{-1}$ given in \Cref{thm:rip}.
\subsection{Tree topology} 
The success probability of sparse recovery for a tree topological network with varying values of probability of observability $p$ (for the channels between sensors and their corresponding relays), the number of relays $R$ and the number of sensors per relay $K$ is shown in \Cref{fig:treeplot}. We set $K=10$ for \Cref{fig:treeR}, and $R=1$ for \Cref{fig:treeK}. For these plots, we set the probability of observing measurement at the fusion center $q=0.7$.
\begin{itemize}
\item \Cref{fig:treeplot} shows that as $p$, $R$, or $K$ increases, the success probability grows. This inference validates Statement (ii) of \Cref{prop:ratiotree}.
\item From \Cref{fig:treeR}, the value of $R$ for which there is a transition in the success probability approximately changes as $-\frac{\alpha}{\log {1-q+q(1-\lambda p)^K}}$ (obtained by fitting a function of $p$ that is indicated by the  cyan dotted curve in \Cref{fig:treeR}) with $\alpha=225$ and $\lambda=0.2$. This observation is in agreement with the term $\ls\ln\frac{1}{ 1-q+q\lb 1-p+p e^{-C\delta^2}\rb^{K}}\rs^{-1}$ given in \Cref{thm:riptree}. 
\item \Cref{fig:treeK} indicates that the success probability does not approach 1 if we increase $K$ while keeping $R$ a constant. This observation confirms the lower bound on $R$ given by \eqref{eq:lowerbound_relay}.
\item \Cref{fig:starplot} and \Cref{fig:treeR} show that for the same value of $p$, high success probability (close to 1) is obtained when the number of sensors ($m=RK$) is much larger for the tree topology, compared to the same for the star topology. This is expected because a measurement reaches the fusion center only if both the channel between the sensor and the corresponding relay, and the relay and the fusion center do not erase the data. Hence, the success probability is smaller when sensors are arranged according to a tree topology compared to the star topology. 
\end{itemize}
\subsection{Serial-star topology}
The recovery performance for a serial-star topological network with varying values of probability of observability $p$, the number of branches $R$ and the number of sensors per branch $K$ is shown in \Cref{fig:lineplot}. We set $K=10$ for \Cref{fig:lineR}, and $R=5$ for \Cref{fig:lineK}. 
\begin{itemize}
\item Similar to the other plots,  \Cref{fig:lineplot} indicates that as $p$, $R$, or $K$ increases, the success probability grows. This inference is in agreement with \Cref{prop:ratioline}.
\item From \Cref{fig:lineR}, the value of $R$ for which there is a transition in the success probability approximately varies as $\alpha\ls\ln \frac{\alpha(1-p\lambda )}{1-p+p(1-\lambda)(p \lambda )^K}\rs^{-1}$ (obtained by fitting a function of $p$ that is indicated by the  cyan dotted curve in \Cref{fig:lineR}) with $\alpha=50$ and $\lambda=0.8$. This observation is in agreement with the term $\ls\ln \frac{1-pe^{-C\delta^2}}{1-p+p(1-e^{-C\delta^2})(pe^{-C\delta^2})^K}\rs^{-1}$ given in \Cref{thm:ripline}. 
\item \Cref{fig:lineK} shows that for a fixed value of $R$ the success probability approaches 1, only if $p$ is close to 1, irrespective of $K$. This observation confirms the lower bound on $R$ given by \eqref{eq:lowerbound_branch}. A comparison of \Cref{fig:treeK,fig:lineK} shows that the behavior of the success probability for the tree topology is different from that of the serial-star topology and the success probability for the tree topology does not approach 1 even when $p=1$. This is intuitive because the channel between the relays and the fusion center erases some data even if $p=1$ (note that $q=0.7$). So, as we mentioned in \Cref{sub:nature}, the success probability is upper bounded by $1-(1-q)^R=0.7$. However, for the serial-star topology,  the success probability is bounded by $1-(1-p)^R$ which goes to $1$ as $p$ approaches 1. Also, it is worth noting that the tree topology with $q=1$ corresponds to the star topology whose recovery performance is plotted in \Cref{fig:starplot}.
\item Figures \ref{fig:starplot} and \ref{fig:lineR} show that for the same value of $p$, high success probability (close to 1) is obtained when the number of sensors ($m=RK$) is much larger for the tree topology, compared to the same for the star topology. This is expected because a measurement reaches the fusion center only if none of the channels between the corresponding sensor and the fusion center erase the data. 
\end{itemize}
\section{Conclusion}
We considered the problem of recovering a sparse signal from its noisy linear measurements under missing data in a sensor network. The missing data mechanism was modeled using an independent Bernoulli process parameterized by the probability of observability associated with  each channel in the network. We derived a sufficient condition required to faithfully recover the unknown signal using standard CS algorithms, for the star, tree, and serial-star topologies. Our  RIP-based analysis established that the additional measurements required depend only on the probability of observability, the RIC of the measurement matrix, and the topology. 

It would be interesting to  extend our results to the other sparsity patterns like block sparsity, piece-wise sparsity, etc. Also, a similar analysis for quantized sparse recovery problems like 1-bit CS is another direction for future work.  Finally, we assumed the knowledge of the missing indices of measurements at the fusion center. Devising algorithms for the setting where the missing indices are unknown, and obtaining their theoretical guarantees can be an interesting avenue for future work as well.
\appendices
\crefalias{section}{appendix}
\section{Proof of \Cref{thm:rip}}\label{app:rip}
\subsection{Toolbox}
In this subsection, we list some of the existing results that are used to prove \Cref{thm:rip}. 
\begin{lemma}\label{lem:RIP}
Let $\matA$ be an $m\times N$ matrix and $\delta_s$ be $s-$RIC of $\matA$. Then, for any $\rho>0$,
\begin{multline}
\bbP\lc\delta_s>\delta\rc\leq \binom{N}{s}\lb1+\frac{2}{\rho}\rb^s\\\times\bbP\lc \vecu\in\bbR^N:\lv \lV\matA\vecu\rV-\lV\vecu\rV\rv\geq (1-2\rho)\delta\lV\vecu\rV\rc,
\end{multline}
\end{lemma}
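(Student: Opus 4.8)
The plan is to prove \Cref{lem:RIP} by the standard covering-number argument, carried out as two nested union bounds: first over the $\binom{N}{s}$ possible supports of a witnessing vector, and then over a finite $\rho$-net of the unit sphere inside each fixed $s$-dimensional coordinate subspace. It is worth stressing that this lemma is essentially deterministic --- it uses nothing about the distribution of $\matA$; its content is a set inclusion, valid for every realization of $\matA$, followed by a union bound over the resulting events.

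First I would reduce to a fixed support. By \Cref{def:RIP}, the event $\lc\delta_s>\delta\rc$ is exactly the event that some $\vecz$ with $\lV\vecz\rV=1$ and $\lV\vecz\rV_0\le s$ has $\lv\lV\matA\vecz\rV^2-1\rv>\delta$. Writing $\Sigma_S$ for the set of vectors supported on a set $S$, this event equals $\bigcup_{\lv S\rv=s}\lc\exists\,\vecz\in\Sigma_S,\ \lV\vecz\rV=1:\ \lv\lV\matA\vecz\rV^2-1\rv>\delta\rc$, a union over the $\binom{N}{s}$ subsets of size $s$. Fix one such $S$ and identify $\Sigma_S\cap\lc\lV\vecz\rV=1\rc$ with the unit sphere of $\bbR^s$. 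I would then fix, once and for all, a $\rho$-net $\mathcal{N}_S$ of this sphere; the classical volumetric bound gives $\lv\mathcal{N}_S\rv\le(1+2/\rho)^s$.

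The step carrying the actual content is the transfer from the net back to the whole sphere, which is where the factor $(1-2\rho)$ originates. Concretely, I would show deterministically that if $\matA$ behaves well at every net point --- $\lv\lV\matA\vecq\rV-\lV\vecq\rV\rv<(1-2\rho)\delta\lV\vecq\rV$ for all $\vecq\in\mathcal{N}_S$ --- then the restricted RIC over $S$ is at most $\delta$. The cleanest route works with the symmetric matrix $\matB=\matA_S\tran\matA_S-\matI$, whose operator norm is precisely the restricted RIC over $S$, together with the elementary fact $\lV\matB\rV_{\mathrm{op}}\le(1-2\rho)^{-1}\max_{\vecq\in\mathcal{N}_S}\lv\vecq\tran\matB\vecq\rv$; this follows from a one-line perturbation estimate $\lv\vecz\tran\matB\vecz-\vecq\tran\matB\vecq\rv\le2\rho\lV\matB\rV_{\mathrm{op}}$ for $\vecq\in\mathcal{N}_S$ chosen near a maximizer $\vecz$ of $\lv\vecz\tran\matB\vecz\rv$. (Equivalently, one may bootstrap the restricted operator norm $L=\sup\lc\lV\matA\vecz\rV:\vecz\in\Sigma_S,\ \lV\vecz\rV=1\rc$ directly, bounding $\lV\matA\vecz\rV$ above and below by $\lV\matA\vecq\rV\pm\rho L$ and solving the resulting self-referential inequality for $L$.) Since $\vecq\tran\matB\vecq=\lV\matA_S\vecq\rV^2-\lV\vecq\rV^2$, this transfers the RIC bound to a deviation at net points, and an elementary inequality relating $\lv\lV\matA\vecq\rV^2-\lV\vecq\rV^2\rv$ and $\lv\lV\matA\vecq\rV-\lV\vecq\rV\rv$ at the unit-norm net points casts it in the form appearing on the right-hand side of the lemma. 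I expect this transfer step to be the main obstacle, in the sense that it is the only place with genuine mathematical content and the place where the constant must be tracked carefully; the remaining steps are routine.

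Finally I would assemble the pieces. The contrapositive of the transfer step yields the pointwise set inclusion $\lc\delta_s>\delta\rc\subseteq\bigcup_{\lv S\rv=s}\bigcup_{\vecq\in\mathcal{N}_S}\lc\lv\lV\matA\vecq\rV-\lV\vecq\rV\rv\ge(1-2\rho)\delta\lV\vecq\rV\rc$, valid for every realization of $\matA$. A union bound over the $\binom{N}{s}$ supports and the at most $(1+2/\rho)^s$ net points per support then gives $\bbP\lc\delta_s>\delta\rc\le\binom{N}{s}(1+2/\rho)^s$ times the single-vector tail probability $\bbP\lc\lv\lV\matA\vecu\rV-\lV\vecu\rV\rv\ge(1-2\rho)\delta\lV\vecu\rV\rc$, which is exactly the right-hand side of \Cref{lem:RIP} (this probability being homogeneous in $\lV\vecu\rV$, the identity of the net points is immaterial). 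The subGaussian concentration estimate that bounds this single-vector event is then supplied by the remainder of \Cref{app:rip}, and plays no role in \Cref{lem:RIP} itself.
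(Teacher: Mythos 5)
Your proposal is in substance the paper's own proof: the paper simply defers to the proof of Theorem 9.11 in Foucart--Rauhut, which is precisely the covering argument you lay out (a union over the $\binom{N}{s}$ supports, a $\rho$-net of cardinality at most $(1+2/\rho)^s$ per support, the operator-norm transfer $\lV\matA_S\tran\matA_S-\eye\rV\le(1-2\rho)^{-1}\max_{\vecq}\lv\vecq\tran(\matA_S\tran\matA_S-\eye)\vecq\rv$, and a final union bound), so the approach and the constants match. The one step I would not wave through is your concluding ``elementary inequality'': the net argument naturally lands on the event $\lv\lV\matA\vecq\rV^2-\lV\vecq\rV^2\rv\geq(1-2\rho)\delta\lV\vecq\rV^2$, and for unit-norm $\vecq$ and $0\leq c\leq1$ the containment runs the wrong way --- $\lv\lV\matA\vecq\rV-1\rv\geq c$ implies $\lv\lV\matA\vecq\rV^2-1\rv\geq c$, not conversely --- so one cannot pass from the squared event to the literally printed unsquared one without degrading the constant (and the direct bootstrap on $L=\sup\lV\matA\vecz\rV$ that you mention as an alternative also does not recover the factor $(1-2\rho)$ exactly). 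This is a defect of the lemma as printed rather than of your argument: the paper itself immediately rewrites the event with squared norms right after \eqref{eq:prob_star} and invokes \Cref{lem:prob_subGaussian}, which is stated for squared norms, so the intended statement is the squared version, and that is exactly what your proof establishes.
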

\begin{proof}
The proof directly follows from the proof of \cite[Theorem 9.11]{foucart2013mathematical}.
\end{proof}

\begin{lemma}[{\cite[Lemma 9.8]{foucart2013mathematical}}]\label{lem:prob_subGaussian}
Let $\matA$ be an $m\times N$ subGaussian random matrix with parameter $c$ (See \Cref{def:subGaussian}). Then, for all $\vecz\in\bbR^N$, and every $t\in(0,1)$, the following relation holds:
\begin{equation}
\bbP\lc \lv m^{-1}\lV\matA\vecz\rV^2-\lV\vecz\rV^2\rv\geq t\lV\vecz\rV\rc\leq 2 \exp\lb-C t^2m\rb,
\end{equation}
where $C$ is a constant that depends only on $c$.
\end{lemma}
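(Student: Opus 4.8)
The plan is to recognize $m^{-1}\lV\matA\vecz\rV^2$ as a normalized sum of $m$ i.i.d.\ nonnegative random variables and to prove a two-sided Bernstein-type concentration bound for it by a Chernoff argument. By homogeneity it suffices to treat $\lV\vecz\rV=1$, in which case the two readings $t\lV\vecz\rV$ and $t\lV\vecz\rV^2$ agree; the general statement follows by rescaling $\vecz$. Writing $\veca_i\tran$ for the $i\nth$ row of $\matA$, set $Y_i \triangleq \veca_i\tran\vecz=\sum_{j=1}^N A_{ij}z_j$, so that $m^{-1}\lV\matA\vecz\rV^2 = m^{-1}\sum_{i=1}^m Y_i^2$ and the $Y_i$ are i.i.d.

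First I would record two facts about a single $Y_i$. Because the entries $A_{ij}$ are independent, zero mean, unit variance, and subGaussian with common parameter $c$, the moment generating function factorizes and \eqref{eq:con_subgauss} yields, for every $\theta\in\bbR$,
\[
\expect{}{\exp\lb\theta Y_i\rb}=\prod_{j=1}^N\expect{}{\exp\lb\theta z_j A_{ij}\rb}\leq\prod_{j=1}^N\exp\lb c\theta^2 z_j^2\rb=\exp\lb c\theta^2\rb ,
\]
so $Y_i$ is itself subGaussian with parameter $c$. Moreover $\expect{}{Y_i^2}=\sum_{j}z_j^2\expect{}{A_{ij}^2}=\lV\vecz\rV^2=1$, so the centred variables $Z_i\triangleq Y_i^2-1$ have mean zero and satisfy $m^{-1}\lV\matA\vecz\rV^2-\lV\vecz\rV^2=m^{-1}\sum_{i=1}^m Z_i$.

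The crux is to show that the square of a subGaussian variable is subexponential, i.e.\ to produce a bound $\expect{}{\exp\lb\theta Z_i\rb}\leq\exp\lb C_1\theta^2\rb$ valid on a bounded interval $\lv\theta\rv\leq c_2$, with $C_1,c_2>0$ depending only on $c$. I would obtain this from the subGaussian tail $\bbP\lc\lv Y_i\rv\geq u\rc\leq 2\exp\lb-u^2/(4c)\rb$ (Markov applied to $\exp(\theta Y_i)$ and optimized) by integrating the tail to bound the even moments, $\expect{}{Y_i^{2k}}\leq 2\,k!\,(4c)^k$, and then summing the Taylor series of $\expect{}{\exp(\theta Y_i^2)}$, which converges precisely for $4c\theta<1$. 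Since $Z_i$ is centred, the cumulant $\ln\expect{}{\exp(\theta Z_i)}$ vanishes to first order in $\theta$, so on the compact interval $\lv\theta\rv\leq c_2$ it is dominated by $C_1\theta^2$. This step is the main obstacle: unlike the subGaussian case, the MGF of $Y_i^2$ is finite only on a bounded interval (a squared Gaussian is $\chi^2$-like), so the argument must carry the admissible range of $\theta$ throughout and absorb the first-order centring into the constants.

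Finally I would run the Chernoff bound on $\sum_i Z_i$. By independence, for $0<\theta\leq c_2$,
\[
\bbP\lc m^{-1}\sum_{i=1}^m Z_i\geq t\rc\leq\exp\lb-m\theta t\rb\prod_{i=1}^m\expect{}{\exp\lb\theta Z_i\rb}\leq\exp\lb-m\theta t+mC_1\theta^2\rb .
\]
Minimizing the exponent gives $\theta^\star=t/(2C_1)$; because $t\in(0,1)$, $\theta^\star$ remains in $(0,c_2]$ for a suitable choice of $c_2$, and substituting yields $\exp\lb-mt^2/(4C_1)\rb$. Applying the identical argument to $-\sum_i Z_i$ controls the lower tail, and a union bound over the two one-sided events supplies the factor $2$, giving the claimed estimate $2\exp\lb-Ct^2m\rb$ with $C=1/(4C_1)$ depending only on $c$. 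I would note that the hypothesis $t\in(0,1)$ is exactly what keeps the Chernoff optimizer inside the subexponential regime, so that the Gaussian-type rate $t^2$ (rather than the linear large-deviation rate $t$) governs the bound.
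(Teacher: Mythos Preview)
Your proposal is correct and follows the standard route: reduce to $\lV\vecz\rV=1$, show that each row-projection $Y_i=\veca_i\tran\vecz$ is subGaussian with parameter $c$, deduce that $Z_i=Y_i^2-1$ is centred and subexponential via moment bounds, and then run a Chernoff argument on the i.i.d.\ sum with the optimizer kept inside the admissible interval by the hypothesis $t\in(0,1)$. The paper does not supply its own proof of this lemma; it is quoted directly from \cite[Lemma~9.8]{foucart2013mathematical}, whose argument is essentially the one you outline. One small point of phrasing: you say $\theta^\star\in(0,c_2]$ ``for a suitable choice of $c_2$,'' but $c_2$ is fixed by the subexponential bound, not free to choose; the correct statement is that enlarging $C_1$ if necessary (which only weakens the final constant $C$) ensures $t/(2C_1)\leq c_2$ for all $t\in(0,1)$. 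With that caveat, your proof matches the cited reference.
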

\subsection{Proof of \Cref{thm:rip}}
Let $\bar{\matA}\triangleq \lv\calT\rv^{-1}\matA_{\calT}\in\bbR^{\lv\calT\rv\times N}$. From \Cref{lem:RIP}, for any $\rho>0$,
\begin{multline}
\bbP\lc\delta_s>\delta\rc
\leq \binom{N}{s}\lb1+\frac{2}{\rho}\rb^s\\ \times\bbP\lc \vecu\in\bbR^N:\lv \lV\bar{\matA}\vecu\rV-\lV\vecu\rV\rv\geq t\lV\vecu\rV\rc,\label{eq:prob_star}
\end{multline}
where we define $t\triangleq(1-2\rho)\delta$. Here, to compute the probability term, we need to handle the randomness in $\bar{\matA}$ which arises due to the randomness in both $\calT$ and $\matA$. 

Further, since $\calT$ is the set of  indices corresponding to the available data, $\lv\calT\rv$ is the sum of $m$ independent  Bernoulli variables. Thus, $\lv\calT\rv$ follows a binomial distribution and 
\begin{equation}\label{eq:binom}
\bbP\lc\lv\calT\rv=i\rc=\binom{m}{i}(1-p)^{m-i}p^i.
\end{equation} 
Therefore, for any $\vecz\in\bbR^{N}$ and $t>0$,
\begin{align}
\bbP\lc \lv\lV\bar{\matA}\vecz\rV^2-\lV\vecz\rV^2\rv\geq t\lV\vecz\rV\rc&\notag\\
&\hspace{-4cm} = \bbP\lc\lv\calT\rv=0\rc + \sum_{i=1}^{m}\bigg[ \bbP\lc\lv\calT\rv=i\rc\\
&\hspace{-2.75cm}\times\bbP\lc \lv \lV\bar{\matA}\vecz\rV^2-\lV\vecz\rV^2\rv\geq t\lV\vecz\rV\middle|\lv\calT\rv=i\rc\bigg]\\
&\hspace{-4cm}\leq (1-p)^m + \sum_{i=1}^m \binom{m}{i}(1-p)^{m-i}p^i \times 2\exp\lb-Ct^2i\rb\label{eq:boundAz_rip}\\
&\hspace{-4cm}\leq 2\sum_{i=0}^m \binom{m}{i}(1-p)^{m-i}\lb p e^{-Ct^2}\rb^i\\
&\hspace{-4cm} = 2\ls1-p+p e^{-Ct^2}\rs^m.
\end{align}
where we use \eqref{eq:binom} and \Cref{lem:prob_subGaussian} to obtain \eqref{eq:boundAz_rip}.  

Finally, from \Cref{lem:RIP}, Stirling's approximation~{\cite[Lemma C.5]{foucart2013mathematical}}, and the definition of $t$, we conclude from \eqref{eq:prob_star} that
\begin{multline*}
\bbP(\delta_s>\delta)
\leq 2\lb\frac{eN}{s}\rb^s\lb1+\frac{2}{\rho}\rb^s\\\times\lb 1-p+p\exp(-C\delta^2(1-2\rho)\rb^m.
\end{multline*}
Setting $\rho=2/(e^{7/2}-1)$ and using $1-2\rho>\frac{\sqrt{3}}{2}$, we get $\delta_s<\delta$ with probability at least $1-\epsilon$ if $m\geq\bstar$(defined in the statement of the theorem). Thus, the proof is complete.
\hfill\qed
\section{Proof of \Cref{prop:ratio}}\label{app:ratio}
To prove Statement (i), we need to show that
\begin{equation}\label{eq:ratiointer_1}
f_1(p,\delta)\triangleq1-p+pe^{-C\delta^2}-\exp\lb-pC\delta^2\rb\geq 0.
\end{equation}
We have 
\begin{align}
\frac{\partial f_1}{\partial p} &= -1+\exp(-C\delta^2)+C\delta^2 \exp\lb-pC\delta^2\rb\\
&\begin{cases}
> 0 & \text{ for } p < \tilde{p}\\
=0 & \text{ for } p = \tilde{p}\\
<0 & \text{ for }  p>\tilde{p},
\end{cases}
\end{align} 
where $\tilde{p}=\frac{1}{C\delta^2}\ln\lb\frac{1-\exp(-C\delta^2)}{C\delta^2}\rb$. Thus, for $p\in[0,1]$, $
f_1(p,\delta)\geq \min\lc f_1(0,\delta),f_1(1,\delta)\rc=0$. Therefore, \eqref{eq:ratiointer_1} holds, and the proof for Statement (i) is complete.
 
Next, to prove Statement (ii), we define
\begin{equation}
f_2(p,\delta)=1-p+p\exp(-C\delta^2).
\end{equation}
Then, we obtain that
\begin{equation}
\frac{\partial f_2}{\partial p} = -1+\exp(-C\delta^2)< 0,
\end{equation} 
since $0<\exp(-C\delta^2)<1$, for all values of $\delta$.
Thus, $f_2$ is a decreasing function of $p$. Further, 
\begin{equation}
\bstar(p,\delta) = -\lb \ln f_2(p,\delta)\rb^{-1}.
\end{equation}
The above relation implies that $\bstar$ is an increasing function of $f_2(p,\delta)$ and hence, $\bstar$ is also a strictly decreasing function of $p$. Hence, the proof is complete. \hfill\qed

\section{Proof of \Cref{thm:riptree}}\label{app:riptree}
To prove \Cref{thm:riptree}, we define the following polynomial:
\begin{equation}\label{eq:gdefn}
g(x) \triangleq \ls 1-q+q(1-p+px)^{K}\rs^R = \sum_{i=0}^{m} D_i x^i,
\end{equation}
where $m =RK$ and $D_i$ is the co-efficient of $x^i$. We note that $D_i\geq 0$ and $\sum D_i=g(1)=1$. Therefore, $\lc D_i\rc$ is a valid probability mass function.
\begin{lemma}\label{lem:generatortree}
The probability $\bbP\lc\lv\calT\rv=i\rc = D_i$ where $D_i$ is as defined in \eqref{eq:gdefn}.
\end{lemma}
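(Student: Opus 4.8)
The plan is to identify $|\calT|$, the number of sensor measurements that reach the fusion center, with a sum of contributions from the $R$ relays, and then recognize its probability generating function as exactly $g(x)$. First I would fix relay $r\in\{1,\dots,R\}$ and let $B_r$ denote the number of sensor measurements attached to relay $r$ that survive the first hop; each of the $K$ sensor-to-relay links is an independent $\Bern(p)$ channel, so $B_r\sim\mathrm{Binomial}(K,p)$ with pgf $\expect{}{x^{B_r}}=(1-p+px)^K$. Next I would introduce the indicator $Z_r\sim\Bern(q)$ of the event that the relay-to-fusion-center link for relay $r$ is not erased, independent of $B_r$; the number of measurements from relay $r$ observed at the fusion center is then $Z_r B_r$, and since $|\calT|=\sum_{r=1}^R Z_r B_r$ with the $R$ summands mutually independent, the pgf of $|\calT|$ factorizes.

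The key computation is the single-relay pgf: conditioning on $Z_r$,
\begin{equation}
\expect{}{x^{Z_r B_r}} = \bbP\lc Z_r=0\rc\cdot 1 + \bbP\lc Z_r=1\rc\expect{}{x^{B_r}} = 1-q+q(1-p+px)^K.
\end{equation}
Multiplying over the $R$ independent relays gives $\expect{}{x^{|\calT|}}=\ls 1-q+q(1-p+px)^K\rs^R = g(x)$. Since $|\calT|$ takes values in $\{0,1,\dots,RK\}=\{0,\dots,m\}$ and the coefficient of $x^i$ in the pgf of a nonnegative integer random variable is precisely the point mass at $i$, comparing with the expansion $g(x)=\sum_{i=0}^m D_i x^i$ yields $\bbP\lc|\calT|=i\rc=D_i$. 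The nonnegativity of the $D_i$ and $\sum_i D_i = g(1)=1$ noted before the lemma are automatically consistent with this, since these are the defining properties of a probability mass function.

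I do not expect a genuine obstacle here; the only thing to be careful about is the independence structure — specifically that the erasures on the two hops are independent of each other and across relays (which is assumed in the model), so that the conditioning step and the product-over-relays step are both legitimate. One could alternatively prove the identity by a direct combinatorial count: $D_i$ is the sum over all ways of distributing $i$ surviving measurements among the relays, of the probability that each chosen relay forwards and that exactly the prescribed number of its $K$ sensors succeed, which reproduces the multinomial-type expansion of $g$; but the generating-function argument is cleaner and I would present that.
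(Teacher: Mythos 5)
Your proof is correct and is essentially the paper's own argument: the paper also computes the per-relay generating factor $1-q+q(1-p+px)^K$ (by listing the probabilities $p_r(k_r)$ of receiving $k_r$ measurements from relay $r$) and multiplies over the $R$ independent relays to obtain $g(x)$. Your phrasing via the pgf of $|\calT|=\sum_{r}Z_rB_r$ is just a cleaner probabilistic packaging of the same computation.
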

\begin{proof}
The proof relies on the technique of generating functions that is widely used in discrete mathematics~\cite{wilf2005generatingfunctionology}. To compute $\bbP\lc\lv\calT\rv=i\rc$, we need to consider all nonnegative integer solutions of the following problem:
\begin{equation}\label{eq:integereq}
\sum_{r=1}^R k_r= i \text{ subject to } 0\leq k_r\leq K.
\end{equation}
Here, $k_r$ represents the number of sensor measurements available at the $r\nth$ relay node. Further, the probability of $k_r$ sensor measurements being available at the $r\nth$ relay node is $\binom{K}{k_r}p^k(1-p)^{K-k_r}$. Therefore, accounting for the uncertainty of data from the relay to the fusion center, the probability $p_r(k_r)$ of receiving $k_r$ sensor measurements from the $r\nth$ sensor is 
\begin{equation}
p_r(k_r) = \begin{cases}
q\binom{K}{k}p^k(1-p)^{K-k} &\text{ if } k_r\neq 0\\
1-q + q (1-p)^{K} &\text{ if } k_r = 0.
\end{cases}
\end{equation}
Thus, for each relay, we construct a  factor of generating function as follows:
\begin{multline}
g'(x) = \prod_{r=1}^R\Bigg[ \lb 1-q+q(1-p)^{K} \rb x^0\\
+  \sum_{k=1}^{K} q\binom{K}{k}p^k(1-p)^{K-k}x^k \Bigg].
\end{multline}
This construction ensures that the coefficient of $x^i$ is the probability of $i$ measurements being available at the fusion center. Finally, on simplifying $g'(x)$, we get that $g'(x)=g(x)$ and the proof is complete.
\end{proof}
\subsection{Proof of \Cref{thm:riptree}}
Let $\bar{\matA}\triangleq \lv\calT\rv^{-1}\matA_{\calT}\in\bbR^{\lv\calT\rv\times N}$. From \Cref{lem:RIP}, for any $\rho>0$,
\begin{multline}
\bbP\lc\delta_s>\delta\rc
\leq \binom{N}{s}\lb1+\frac{2}{\rho}\rb^s\\\times\bbP\lc \vecu\in\bbR^N:\lv \lV\bar{\matA}\vecu\rV-\lV\vecu\rV\rv\geq t\lV\vecu\rV\rc,\label{eq:prob_tree}
\end{multline}
where we define $t\triangleq(1-2\rho)\delta$. Here, to compute the probability term, we need to handle the randomness in $\bar{\matA}$ which arises due to the randomness in both $\calT$ and $\matA$. 
Then, for any $\vecz\in\bbR^{N}$ and $t>0$,
\begin{align}
\bbP\lc \lv\lV\bar{\matA}\vecz\rV^2-\lV\vecz\rV^2\rv\geq t\lV\vecz\rV\rc&\notag\\
&\hspace{-4cm} = \bbP\lc\lv\calT\rv=0\rc + \sum_{i=1}^{m}\bigg[ \bbP\lc\lv\calT\rv=i\rc\notag\\
&\hspace{-2.75cm}\times\bbP\lc \lv \lV\bar{\matA}\vecz\rV^2-\lV\vecz\rV^2\rv\geq t\lV\vecz\rV\middle|\lv\calT\rv=i\rc\bigg]\\
&\hspace{-4cm}\leq D_0 + \sum_{i=1}^{m} D_i \times 2\exp\lb-Ct^2i\rb\label{eq:boundAz}\\
&\hspace{-4cm}\leq 2g\lb e^{-Ct^2}\rb = 2\ls 1-q+q\lb 1-p+p e^{-Ct^2}\rb^{K}\rs^R,\label{eq:boundAz1}
\end{align}
where we use \Cref{lem:generatortree} and \Cref{lem:prob_subGaussian} to obtain \eqref{eq:boundAz}. Also, we use \eqref{eq:gdefn} to get \eqref{eq:boundAz1}.

Finally, from \Cref{lem:RIP}, Stirling's approximation~{\cite[Lemma C.5]{foucart2013mathematical}}, and the definition of $t$, we conclude from \eqref{eq:prob_tree} that
\begin{multline}
\bbP(\delta_s>\delta)
\leq 2\lb\frac{eN}{s}\rb^s\lb1+\frac{2}{\rho}\rb^s\\\times\ls 1-q+q\lb 1-p+p e^{-Ct^2}\rb^{K}\rs^R.
\end{multline}
Setting $\rho=2/(e^{7/2}-1)$ and using $1-2\rho>\frac{\sqrt{3}}{2}$, we get $\delta_s<\delta$ with probability at least $1-\epsilon$ if $R\geq \btree$ (defined in the statement of the theorem).
Thus, the proof is complete.
\hfill\qed

\section{Proof of \Cref{thm:ripline}}\label{app:ripline}
To prove \Cref{thm:riptree}, we define the following polynomial:
\begin{equation}\label{eq:hdefn}
h(x) \triangleq \ls \frac{1-p+p^{K+1}x^K(1-x)}{1-px}\rs^R = \sum_{i=0}^{m} d_i x^i,
\end{equation}
where $m = \sum_{r=1}^{R}K$ and $d_i$ is the co-efficient of $x^i$. We note that $d_i\geq 0$ and $\sum d_i=h(1)=1$. Therefore, $\lc d_i\rc$ is a valid probability mass function.
\begin{lemma}\label{lem:generatorline}
The probability $\bbP\lc\lv\calT\rv=i\rc = d_i$ where $d_i$ is as defined in \eqref{eq:hdefn}.
\end{lemma}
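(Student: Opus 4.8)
The plan is to mirror the generating‑function argument already carried out for \Cref{lem:generatortree}; the only genuinely new ingredient is the per‑branch distribution of the number of measurements that reach the fusion center, which for the serial‑star topology is \emph{not} binomial. First I would fix a single branch and let $k_r$ be the number of its $K$ sensor measurements received at the fusion center. The $K$ links of the branch are i.i.d.\ $\Bern(p)$, and the measurement of the $i\nth$ sensor arrives if and only if \emph{all} of the $K-i+1$ downstream links are non‑erasing. These events are nested: if sensor $i$'s measurement gets through, then links $i+1,\ldots,K$ are non‑erasing, so the measurements of sensors $i+1,\ldots,K$ also get through. Hence the set of sensors whose data reaches the fusion center is a suffix $\{i^\ast,i^\ast+1,\ldots,K\}$ (possibly empty), and $k_r$ equals the length of the trailing run of successes among the $K$ link indicators counted from the fusion‑center end. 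This gives $\bbP\lc k_r=j\rc=(1-p)p^j$ for $j=0,1,\ldots,K-1$ and $\bbP\lc k_r=K\rc=p^K$, and summing these with the finite geometric series confirms they form a valid probability mass function. This is the step that distinguishes the serial‑star analysis from the tree analysis (where the per‑relay counts were binomial), and it is the only delicate point.

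Next I would form the probability generating function of $k_r$,
\[
\phi(x)=(1-p)\sum_{j=0}^{K-1}(px)^j+p^Kx^K=(1-p)\,\frac{1-(px)^K}{1-px}+p^Kx^K ,
\]
and simplify the right‑hand side to the single‑branch factor $\dfrac{1-p+p^{K+1}x^K(1-x)}{1-px}$ that appears inside the brackets of \eqref{eq:hdefn}. This is a routine rational‑function manipulation (clear the common denominator $1-px$ and collect powers of $x^K$ and $x^{K+1}$), so I would not dwell on it.

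Finally, since the $R$ branches are mutually independent (all $RK$ links are independent Bernoulli erasure channels) and $\lv\calT\rv=\sum_{r=1}^R k_r$, the generating function of $\lv\calT\rv$ is the product $\phi(x)^R$, which by the computation above is exactly $h(x)$ of \eqref{eq:hdefn}. Reading off the coefficient of $x^i$ yields $\bbP\lc\lv\calT\rv=i\rc=d_i$, completing the proof. The main obstacle is purely the first step: correctly justifying the ``trailing run of successes'' characterization of $k_r$ and hence the truncated‑geometric form of its law; everything after that parallels \Cref{lem:generatortree}.
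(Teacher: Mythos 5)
Your proposal is correct and follows essentially the same route as the paper's proof: identify the per-branch law of $k_r$ as $(1-p)p^{j}$ for $j<K$ and $p^{K}$ for $j=K$, form the per-branch generating function, and raise it to the $R$-th power by independence before reading off coefficients. Your explicit ``trailing run of successes'' justification of the truncated-geometric per-branch distribution is a point the paper merely asserts, but it does not change the argument.
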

\begin{proof}
The proof relies on the technique of generating functions used in \Cref{app:riptree}. To compute $\bbP\lc\lv\calT\rv=i\rc$, we need to consider all nonnegative integer solutions of \eqref{eq:integereq} where $k_r$ represents the number of sensor measurements available to the fusion center from the $r\nth$ branch. Further, the probability $p_r(k_r)$ of $k_r$ sensor measurements being available at the $r\nth$ branch is
\begin{equation}
p_r(k_r) = \begin{cases}
(1-p)p^{k_r} &\text{ if } k_r\neq K\\
p^{k_r} &\text{ if } k_r = K.
\end{cases}
\end{equation}
Thus, for each relay, we construct a  factor of generating function as follows:
\begin{equation}
h'(x) = \prod_{r=1}^R\Bigg[ p^{K}x^K+  \sum_{k=0}^{K-1} (1-p)p^kx^k \Bigg].
\end{equation}
This construction ensures that the coefficient of $x^i$ is the probability of $i$ measurements being available at the fusion center. Finally, on simplifying $h'(x)$, we get
\begin{align}
h'(x) &= \ls p^Kx^K+(1-p)\frac{1-(px)^K}{1-px}\rs^R\\
&= \ls \frac{1-p+p^{K+1}x^{K}(1-x)}{1-px}\rs^R=h(x).
\end{align}
Hence, the proof is complete.
\end{proof}
\subsection{Proof of \Cref{thm:ripline}}
Let $\bar{\matA}\triangleq \lv\calT\rv^{-1}\matA_{\calT}\in\bbR^{\lv\calT\rv\times N}$. From \Cref{lem:RIP}, for any $\rho>0$,
\begin{multline}
\bbP\lc\delta_s>\delta\rc
\leq \binom{N}{s}\lb1+\frac{2}{\rho}\rb^s\\\times\bbP\lc \vecu\in\bbR^N:\lv \lV\bar{\matA}\vecu\rV-\lV\vecu\rV\rv\geq t\lV\vecu\rV\rc,\label{eq:prob_line}
\end{multline}
where we define $t\triangleq(1-2\rho)\delta$. Here, to compute the probability term, we need to handle the randomness in $\bar{\matA}$ which arises due to the randomness in both $\calT$ and $\matA$. 

Therefore, for any $\vecz\in\bbR^{N}$ and $t>0$,
\begin{align}
\bbP\lc \lv\lV\bar{\matA}\vecz\rV^2-\lV\vecz\rV^2\rv\geq t\lV\vecz\rV\rc&\notag\\
&\hspace{-4.2cm} = \bbP\lc\lv\calT\rv=0\rc + \sum_{i=1}^{m}\bigg[ \bbP\lc\lv\calT\rv=i\rc\notag\\
&\hspace{-2.75cm}\times\bbP\lc \lv \lV\bar{\matA}\vecz\rV^2-\lV\vecz\rV^2\rv\geq t\lV\vecz\rV\middle|\lv\calT\rv=i\rc\bigg]\\
&\hspace{-4.2cm}\leq d_0 + \sum_{i=1}^{m} d_i \times 2\exp\lb-Ct^2i\rb\label{eq:boundAz_line}\\
&\hspace{-4.2cm}\leq 2h\lb e^{-Ct^2}\rb = 2\ls \frac{1-p+p^{K+1}e^{-Ct^2K}(1-e^{-Ct^2})}{1-pe^{-Ct^2}}\rs^R,\label{eq:boundAz1_line}
\end{align}
where we use \Cref{lem:generatorline} and \Cref{lem:prob_subGaussian} to obtain \eqref{eq:boundAz_line}. Also, we use \eqref{eq:hdefn} to get \eqref{eq:boundAz1_line}.

Finally, from \Cref{lem:RIP}, Stirling's approximation~{\cite[Lemma C.5]{foucart2013mathematical}}, and the definition of $t$, we conclude from \eqref{eq:prob_line} that
\begin{multline}
\bbP(\delta_s>\delta)
\leq 2\lb\frac{eN}{s}\rb^s\lb1+\frac{2}{\rho}\rb^s\\\times\ls  \frac{1-p+p^{K+1}e^{-Ct^2K}(1-e^{-Ct^2})}{1-pe^{-Ct^2}}\rs^R.
\end{multline}
Setting $\rho=2/(e^{7/2}-1)$ and using $1-2\rho>\frac{\sqrt{3}}{2}$, we get $\delta_s<\delta$ with probability at least $1-\epsilon$ if $R\geq \bline$(defined in the statement of the theorem).
Thus, the proof is complete.
\hfill\qed

\section{Proof of \Cref{prop:ratioline}}\label{app:ratioline}
In the following subsections, we prove the two statements of  \Cref{prop:ratioline}.
\subsection{Proof of Statement (i)}
To prove Statement (i), we need to show the following:
\begin{equation}\label{eq:firststep_ratioline}
\bline(p,\delta,K)\geq\bstar(p,\sqrt{K}\delta).
\end{equation}
Then, the remaining part of Statement (i) follows from \Cref{prop:ratio}. 
On further simplification, it is easy to see that the above relation is proved if we establish the following:
\begin{equation}\label{eq:secondstep_ratioline}
f_1(p,d)\triangleq 1-p -  d^{K-1}(1-p^K)+d^{K}(p -p^K)\geq 0,
\end{equation}
for all $p\in[0,1]$ and $d\in(0,1)$ where $d=e^{-C\delta^2}$. Then, 
\begin{equation}\label{eq:f1_derivative}
\frac{\partial f_1}{\partial d}=d^{K-2}\lb Kd(p -p^K)-(K-1)(1-p^K)\rb<0,
\end{equation}
for all $p\in[0,1]$, if $0<d<\frac{(K-1)(1-p^K)}{K(p -p^K)}$. 

Next, we show that $\frac{(K-1)(1-p^K)}{K(p -p^K)}\geq 1$ which implies that $\frac{\partial f_1}{\partial d}<0$ for $0<d<1$. For this, we define 
\begin{equation}\label{eq:f2_defn}
f_2(p)\triangleq (K-1)(1-p^K)-K(p -p^K)= K-1-Kp+p^K,
\end{equation}
and show that $f_2(p)\geq 0$. Then, for all $p\in[0,1]$,
\begin{equation}
\frac{\partial f_2}{\partial p} = -K(1-p^{K-1})\leq 0.
\end{equation}
Therefore, $f_2(p)\geq f_2(1)=0$, for all $p\in[0,1]$. Hence, \eqref{eq:f2_defn} implies that $\frac{(K-1)(1-p^K)}{K(p -p^K)}\geq1$. Thus, \eqref{eq:f1_derivative} gives $\frac{\partial f_1}{\partial d}<0$ for $0<d<1$. Consequently, we obtain that, for all $p\in[0,1]$ and $0<d<1$,
\begin{equation}
f_1(p,d) \geq f_1(p,1) =0.
\end{equation} 
Thus, from \eqref{eq:secondstep_ratioline}, the proof is complete.

\subsection{Proof of Statement (ii)} 
To prove Statement (ii), we define the function $f_3$ as follows:
\begin{equation}
f_3(p)\triangleq  \frac{1-pd}{1-p+p(1-d)(pd)^K},
\end{equation}
where $d=e^{-C\delta^2}$.
We note that 
\begin{equation}\label{eq:ratioline_inter2}
\bline =C\delta^2\ln^{-1} (f_3(p))\beta(\delta),
\end{equation}
where $\beta(\delta)$ is defined in \eqref{eq:std_bound}. Therefore, to show that $\bline$ is a decreasing function of $p$, it suffices to show that $f_3(p)$ is an increasing function of $p$. We prove this by showing that the derivative $\frac{\partial f_3}{\partial p}>0$. We have
\begin{multline}
\frac{\partial f_3}{\partial p}=\frac{1-d}{\ls 1-p+p(1-d)(pd)^K\rs^2}\\
\times \lb 1-(K+1)(pd)^K +K(pd)^{K+1}\rb.
\end{multline}
Since $d=e^{-C\delta^2}<1$, we deduce that
\begin{equation}
\frac{1-d}{\ls 1-p+p(1-d)(pd)^K\rs^2}>0,
\end{equation}
for all values of $p\in[0,1]$ and $\delta\in(0,1)$. Thus, we obtain that $\frac{\partial f_3}{\partial p}>0$ if and only $f_4(pd)>0$ where we define the function $f_4$ as
\begin{equation}
f_4(x) = 1-(K+1)x^K +Kx^{K+1}.
\end{equation}
Here, the derivative $\frac{\partial f_4}{\partial x}$ is as follows:
\begin{equation}
\frac{\partial f_4}{\partial x} = -K(K+1)x^{K-1}(1-x)
\leq 0,
\end{equation}
if $0\leq x<1$.  Also, $0\leq pd=pe^{-C\delta^2}<1$ for all values of $p\in[0,1]$ and $\delta\in(0,1)$. Therefore, we conclude that
\begin{equation}\label{eq:ratioline_inter1}
f_4(pd) >  \underset{x\in[0,1)}{\sup} f_4(x) =\lim_{x\to 1^{-}}f_4(x)= 0,
\end{equation}
Hence, we get that $f_4(pd)>0$ which, in turn, implies that $\frac{\partial f_3}{\partial p}>0$. Therefore, $f_3(p)$ is a strictly increasing function of $p$, and thus, \eqref{eq:ratioline_inter2} implies that $\bline$ is a strictly decreasing function of $p$. Hence, the proof is complete.
\hfill\qed
\bibliographystyle{IEEEtran}
\bibliography{Supporting_Files/IEEEabrv,Supporting_Files/bibJournalList,Supporting_Files/Missing_cite}
\end{document}